\newtheorem{thm}{Theorem}
\newtheorem{cor}{Corollary}
\newtheorem{lem}{Lemma}
\newtheorem{remark}{Remark}
\begin{document}
\title{Backscatter Communications for the Internet of Things: A Stochastic Geometry Approach}
\author{\IEEEauthorblockN{Mudasar Bacha and Bruno Clerckx}\thanks{M. Bacha and B. Clerckx is with the Communication and Signal Processing Group, Department
of Electrical and Electronic Engineering, Imperial College London, London
SW7 2AZ, U.K. (e-mail: {m.bacha13, b.clerckx}@imperial.ac.uk). } 
\thanks{This work was supported in part by the EPSRC of UK, under Grant EP/P003885/1.}}
\maketitle
\begin{abstract}
Motivated by the recent advances in the Internet of Things (IoT) and in Wireless Power Transfer (WPT), we study a network architecture that consists of power beacons (PBs) and passive backscatter nodes (BNs).  The PBs transmit a sinusoidal continuous wave (CW) and the BNs reflect back a portion of this signal while harvesting the remaining part. A BN harvests energy from multiple nearby PBs and modulates its information bits on the composite CW through backscatter modulation. The analysis poses real challenges due to the double fading channel, and its dependence on the PPPs of both the BNs and PBs. However, with the help of stochastic geometry, we derive the coverage probability and the capacity of the network in tractable and easily computable expressions, which depend on different system parameters.  We observe that the coverage probability decreases with an increase in the density of the BNs, while the capacity of the network improves. We further compare the performance of this network with a regular powered network in which the BNs have a reliable power source and show that for a very high density of the PBs, the coverage probability of the former network approaches that of the regular powered network.
\end{abstract}
\begin{IEEEkeywords}
Stochastic geometry, backscatter communications, IoT, wireless power transfer, energy harvesting, Poisson point processes.
\end{IEEEkeywords}
\section{Introduction}
The emerging Internet of Things (IoT) is expected to connect billions of small computing devices to the Internet \cite{cisco}. These tiny devices have  processing, sensing and wireless communications capabilities. They are supposed to be deployed everywhere  and can be accessed from anywhere at any time. However, powering these small devices is one of the main challenges (others are interoperability, management, security and privacy) \cite{challenges_1, challenges_2}. It is very expensive and impractical to replace the batteries of such a massive number of devices or power them with wires. Therefore, harvesting energy from external sources for perpetual operation is a viable option \cite{challenges_2}. 
The advancement in  Wireless Power Transfer (WPT) has made it possible to power IoT devices \cite{bruno_WPT}.  

We study a random network consisting of passive backscatter nodes (BNs) and  power beacons (PBs). 
The PBs are deployed for WPT  and they transmit sinusoidal continuous wave (CW)\footnote{Significant enhancements can nevertheless be obtained by proper multisine waveform designs \cite{ekaterina,zati}.}. The mobile nodes or backscatter nodes (BN), which are in the range of PBs will backscatter a  portion of this CW to a nearby receiver by mismatching its antenna impedance, while harvesting the remaining power to operate its integrated circuit \cite{boyer}. A BN modulates the data onto the backscattered signal by controlling its antenna impedance. A schematic diagram of backscatter communication system is shown in Fig. \ref{BN_node}. In contrast to conventional radio architecture relying on power hungry RF chains, the backscatter nodes do not have any active RF components. As a result, the BN has miniature hardware and very low power consumption. For more details on the backscatter operation, the readers are referred to \cite{boyer, kaibin_magzine}, and the references therein.  The aim of this work is to study the performance of this unique network, where the BNs are powered through wireless power transfer from the PBs. The analysis is performed by using tools from the stochastic geometry. 
	\begin{figure} 
	\centering
		\includegraphics[scale=0.4]{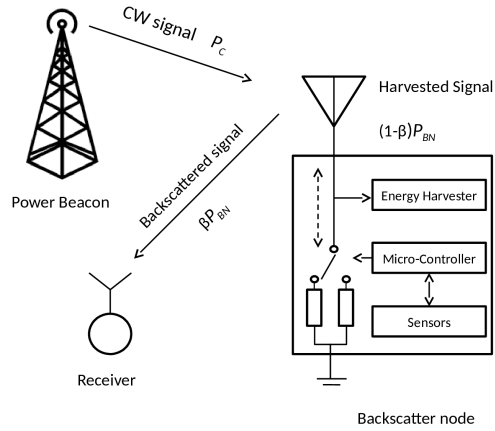}
		\caption{Backscattering communication system}
	\label{BN_node}
	\end{figure}
\begin{table*}
\caption{List of Notations}
	\centering
		\begin{tabular}{|c|l|}
		\hline
		Notation & Description \\
		\hline 	\hline	
		$\Phi_p$, $\Phi_b$, $\lambda_p$, $\lambda_b$ & PPP of the PBs, PPP of the BNs, density of the PBs, density of the BNs   \\	\hline
		$P_C$, $\beta$, $N_p$ & transmit power of each PB, backscattering efficiency, average number of the PBs in the harvesting region\\ \hline
		$\alpha^f$, $\alpha^b$ & path loss exponent of the forward channel, path loss exponent of the backward channel \\ \hline
		$P_{BN}$, $\bar{P}_{BN}$ & received power at BN, average received power at BN \\ 	\hline
		$d_{i,j}$, $x_{j,k}$ & distance between $i$th receiver and $j$th BN, distance between $j$th BN and $k$th PB \\ \hline
		$\mathrm{Ei}\left(z\right)$ & $\mathrm{Ei}\left(z\right)$ is the exponential integral function $\mathrm{Ei}\left(z\right)=\int_z^{\infty}\mathrm{e}^{-t}t^{-1}\mathrm{d}t$  \\ \hline
		$\mathbb{P}_s, \mathbb{P}\left[.\right], \mathbb{E}_x$ & coverage probability, probability of an event, expectation with respect to $x$\\ \hline
		$\Theta_m, N_0, \gamma_{R_0}', \gamma_{R_0}$ &  SINR threshold, variance of the noise, SINR at typical receiver $R_0$, modified SINR at $R_0$ \\ \hline
		$h_{i,j}^f$ & complex channel gain of the forward channel between the $i$th BN and $j$th PB while $\mathbf{h}_{i,j}^f=\left|h_{i,j}^f\right|^2$ \\ \hline
		$h_{j,k}^b$ & complex channel gain of the backward channel between the $j$th receiver and $k$th BN while $\textbf{h}_{j,k}^b =\left|h_{j,k}^b\right|^2$ \\ \hline
		\end{tabular}
	\label{table1}
\end{table*}
\subsection{Related Work}
Backscatter communication is traditionally used in the Radio Frequency IDentification (RFID) systems, where a RFID-reader reads data from nearby tag via backscatter modulation\cite{boyer,kaibin_magzine,RFID_intro}.  Recently, there have been some studies, which use backscatter communication by harvesting energy from ambient sources such as TV and Wi-Fi signals \cite{ambient,wifi,modulation_air_ofdm}. However, this ambient energy harvesting technique cannot be used in the large scale IoT due to scalability issue \cite{kaibin_magzine, kaifeng}. The authors in \cite{collision_resol_letter} consider a single cell network in which there is one reader in the center of the cell and the sensors node are distributed uniformly within this cell. They studied the decoding probability under different collisions resolution schemes such as using directional antennas, successive interference cancellation, and ultra-narrow band transmission. They observed that a combination of these technique give significant gains. A full-duplex backscatter communication setup is considered in \cite{full_duplex_time_hoping} where a multiple access scheme based on time-hopping spread-spectrum is proposed to reduce the interference. A comparison between multistatic and monostatic radio architecture has been done in \cite{multistatic}, where it has been demonstrated that multistatic architecture outperforms the monostatic architecture in term of diversity order, bit error rate, energy outage and coverage. 

Most of the above works considered a single-reader single-tag or single-reader multiple-tags except \cite{kaifeng}, which consider a wireless powered backscatter communication network in which dense PBs are deployed to wirelessly power BNs. They  model the network as Poisson cluster process (PCP), where PBs constitute the parent point process and the BNs form daughter point processes. Using tools from stochastic geometry they studied the coverage and network capacity of the network. However, they considered that a BN only harvests energy from a single PB and they also did not consider fading in the forward channel. Moreover, they derived lower bounds on the coverage and capacity of the network. 

In our work, we study a more generic and close to practical network setup. In our setup both the PBs and the BNs are distributed according to PPPs and we consider fading in both the forward and backward channel. We also consider that the BNs can harvest and backscatter energy from multiple nearby PBs. Then we utilize tools from stochastic geometry to study the coverage and capacity of such network. 
Stochastic geometry has emerged as a powerful tool for the analysis of wireless networks due to its amazing tractability and accuracy \cite{hesham_tutorial}. It has been used in the analysis of the uplink cellular networks \cite{dude_mudasar}, the downlink cellular networks \cite{marco_dl},  ad-hoc networks \cite{jeff_adhoc}, and millimeter wave networks \cite{marco_milimeter}. Recently, it has been used to study the wireless powered network \cite{kaifeng,collision_resol_letter,joint_harpeet}. 



\subsection{Contributions and Outcomes}
We develop a comprehensive model for the wireless powered backscatter IoT network in which the locations of PBs and BNs are distributed as independent PPPs. The BNs modulate the information  on the unmodulated CW signal from the nearest PBs and reflect a portion of the signal to the receiver. The remaining energy is harvested for the operation of its integrated circuit. We study the coverage and capacity of this network for a typical BN where the capacity is defined as the number of successful transmissions between the BNs and their receivers in per unit area of the network.  In classical downlink cellular network, the coverage (and capacity) depends on a single channel (between a base station  and a mobile terminal) and the density of the base stations (BSs), while in the uplink it depends on the channel and the density of the mobile terminals (MTs). Furthermore,  the BS has a constant transmit power in downlink while the mobile terminals use power control in the uplink.  Whereas, in our setup, the coverage probability (and capacity) depends on the double fading channel, the PPP of BNs and the joint probability distribution function (PDF) of the distances of the nearest PBs. In addition, the reflected power of a BN is a random quantity and depends on the forward channel and the PPP of the PBs. Thus all these unique features make the analysis very challenging. 

To tackle those challenges, we use dimensionality reduction technique in which we apply the expectation with respect to the forward channel to both the signal and interference power before the analysis is carried out. This makes the analysis tractable and we find insightful  and accurate enough expressions for both the coverage and capacity of the network. Our analysis captures the effect of instantaneous energy harvesting from multiple PBs, backscatter modulations of the composite signal and double fading channel in a single tractable expression. This significantly contrasts with \cite{kaifeng} that uses a constant transmit power for the BNs and derive only lower bound on the coverage probability and capacity of the network. 

To get further insight, our coverage probability expression can be simplified into a single integral form when the BN harvests energy from a single nearest PBs and the path-loss exponent of both the forward and backward channel is the same. It can be even simplified to a closed form solution when we consider that the BNs reflect the signal with the mean harvested power. We further consider a network scenario in which the BN harvests from all the PBs in the network and backscatter their signal. This case is even more challenging because the coverage probability depends on the PPPs of both PBs and BNs, and the double fading channel, therefore, we only find the approximation for this case. The accuracy of the analysis is verified with extensive Monte Carlo simulations. 

Leveraging the above analysis, we are in a position to ask and answer some important and new questions/trade-offs that emerge in the wireless powered backscattered IoT network. Firstly, increasing the density of the PBs improves the harvested and reflected power, but at the same time it increases the amount of  interference. So, how does the coverage of the network vary with the density of the PBs? Secondly, what is the impact of the density of the BNs on the coverage and the capacity of the network? Finally, can the wireless powered backscattered network give the same coverage as of the classical cellular network where BNs are powered with reliable power source rather than through backscattering?
The answers to these questions are outlined below:
\begin{itemize}
	\item{}	We observe that increasing the density of the PBs increases the coverage probability of the BNs, where the coverage probability is defined as the probability that the instantaneous SINR at the randomly chosen receiver is greater than some predefined SINR threshold.  However, when the SINR threshold is high,  further increasing the density can decrease the coverage probability. This decrease in the coverage probability at a high SINR threshold is due to the increase in the interference power. The high SINR is achieved only by a very small fraction of the BNs, which are closer to the PBs and an increase in the density of the PBs increases both the signal and interfering power of the BNs. However, the increase in the interference power is more prominent than the increase in the signal power.
	\item{}  The coverage probability can be considered as the individual link reliability while the capacity is the number of successful transmissions. Increasing the density of the BNs decreases the coverage probability, whereas it increases the capacity of the network. 
	Therefore, the network provider should take into account the tradeoff between the individual link-reliability and the number of successful transmissions. 
	\item{} We compare the wireless powered backscatter IoT network with the regular powered network in which the BNs are supplied with reliable power source having the same transmit power as of the PBs. We observe that the coverage of the wireless powered backscatter network approaches to the coverage of the regular powered network for a very high density of the PBs. We further observe that up to a certain extent, increasing the density of PBs gives rapid increase in the coverage probability. However, beyond a certain density, any further  increase in the density of the PBs shows a very small improvement in the coverage probability. This suggests that it may not be cost-effective to keep increasing the density of PBs beyond a certain level.
\end{itemize}
	
The rest of the paper is organized as follows. In Section II, we present our system model and discuss our performance metrics. In Section III, we conduct the analysis and find the coverage and capacity of the network, while in Section IV we present the numerical and simulations results. Finally, we conclude and provide future research directions in Section V.

The key notations used in this paper are given in Table 1.

\section{System Model}
\subsection{Network Model}
\begin{figure*} 
	\centering
		\includegraphics[scale=0.28]{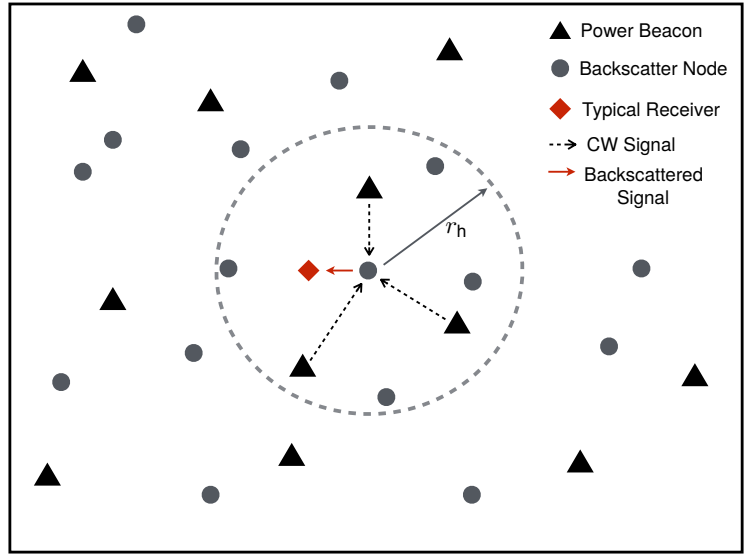}
		\caption{Network model}
	\label{Network_model}
	\end{figure*}
We consider an ad-hoc network consisting of PBs and BNs. The positions of both the PBs and BNs are modeled as independent homogeneous Poisson point processes (PPPs). Let $\Phi_p$ and $\Phi_b$ be the 2-D PPPs with densities $\lambda_p$ and $\lambda_b$ of the PBs and BNs, respectively. 
The PBs transmit a CW isotropically and all the PBs use the same frequency. A BN harvests some energy from this carrier\footnote{through AC-to-DC conversion} and modulates its data on the carrier through backscatter modulation \cite{boyer}. A BN receives a CW signal from multiple PBs and backscatters the signal. This reflected signal from the BN is received by a receiver, which can be located at a fixed distance from this BN in an isotropic direction. The network topology is shown in Fig. \ref{Network_model}. 
Without loss of generality, the analysis is performed for a \textit{typical} $\text{BN}_0$, which is transmitting the signal to its dedicated receiver $\text{R}_0$ \cite{martin_book}. 
In order to make the analysis tractable, we do not consider the circuit-power constraint and assume that the BNs modulate and  backscatter the signal all the time. Moreover, we consider that the PBs and the BNs have a single antenna. 

We assume that, around each BN, there is a region $z(Y,r_h)\subset\mathbb{R}^2$, which represents a disk of radius $r_h$ centered at $Y\in\Phi_b$. All the PBs inside this region contribute to the  received power $P_{BN}$ at the BN. We further assume that the power received from the PBs which are outside the harvesting region is negligible.  The number of PBs $\bar{N}_p$ inside $z(Y,r_h)$ is a Poisson random variable with mean ${N}_p=\pi r_h^2\lambda_p$. In order to avoid the expectation with respect to the probability density function (PDF) of $\bar{N}_p$ in the analysis, we use the average number of PBs $N_p$ in the rest of the paper. 
We adjust $r_h$ such that  $N_p$ is an integer number. The received power  at $\text{BN}_0$ can be written as
\begin{equation}
P_{{BN}_0} = \left|\sum_{i=1}^{{N}_p}\sqrt{P_C}h_{0,i}^f x_{0,i}^{-\alpha^f/2}\right|^2,
\label{px_pb}
\end{equation} 
where $P_C$ is the transmit power of a PB, $h_{0,i}^f$ represents the forward channel gain between $\text{BN}_0$ and  its $i$th nearest PB  and $h_{0,i}^f\sim \mathrm{CN}\left(0,1\right)$, $\alpha^f$ represents the path loss exponent of the forward channel i.e., between the PBs and $\text{BN}_0$,  and $x_{0,i}$ is the distance between  $\text{BN}_0$ and its $i$th nearest PB. The received signal by $\text{BN}_0$ is modulated and reflected to the receiver. The portion of the received power  reflected back by $\text{BN}_0$ is modeled with a reflection coefficient $\beta\in\left[0,1\right]$, which we call as backscattering efficiency. $\text{BN}_0$ reflects $\beta P_{{BN}_0}$ amount of power while the remaining power $\left(1-\beta\right) P_{{BN}_0}$ is harvested (as shown in Fig. \ref{BN_node}) \cite{boyer, bletsas}. 
The harvested power is used by $\text{BN}_0$ to run its  integrated circuit (IC). 

The joint PDF of the distances from $\text{BN}_0$ to its nearest $N_p$ PBs  will be used throughout the paper, therefore, we present it here 
\begin{equation}
f_x(x_{0,1},\cdots,x_{0,N_p}) =  \left(2\pi\lambda_p\right)^{N_p} x_{0,1},\dots, x_{0,N_p} \mathrm{e}^{-\pi\lambda_p x_{0, N_p}^2},
\label{joint_pdf_dist}
\end{equation}
where $x_{0,1} < x_{0,2}\cdots < x_{0,N_p}$ \cite{distance_distribution}. The joint PDF of the distances between the non-typical BNs to their nearest $N_p$ PBs is the same as \eqref{joint_pdf_dist}; however, to make the distinction clear, we represent it by $f_x(x_{1},\cdots,x_{N_p})$ throughout the paper.
The SINR $\gamma_{R_0}$ at a receiver $\text{R}_0$, when the $\text{BN}_0$ transmits under the above system model and in the presence of interference from other active BNs can be written as
\begin{equation}
\gamma_{R_0}' = \frac{\beta \textbf{h}_{0,0}^b d_{0,0}^{-\alpha^b}\left|\sum_{i=1}^{N_p}\sqrt{P_C}h_{0,i}^f x_{0,i}^{-\alpha^f/2}\right|^2}{\beta \sum_{j\in\Phi_b\backslash BN_0}\textbf{h}_{0,j}^b d_{0,j}^{-\alpha^b}\left|\sum_{k=1}^{N_p}\sqrt{P_C}h_{j,k}^f x_{j,k}^{-\alpha^f/2}\right|^2+N_0},
\label{sinr_first}
\end{equation}
where $\textbf{h}_{i,j}^b =\left|h_{i,j}^b\right|^2$ is the backward channel gain between $i$th receiver and $j$th BN and $h_{i,j}^b\sim \mathrm{CN}\left(0,1\right)$;  $h_{m,n}^f$ is the forward channel between $m$th BN and $n$th PB and $h_{m,n}^f \sim \mathrm{CN}\left(0,1\right)$; $d_{i,j}$ is the distance between $i$th receiver and $j$th BN; $x_{m,n}$ is the distance between $m$th BN and $n$th PB; $\alpha^b$ is the path loss exponent of the backward channel i.e., between the BN and the receiver; $N_0$ is the variance of the additive white Gaussian noise (AWGN). In order to avoid the singularity in the forward link (infinite received power at a BN), we use the bounded path loss model i.e., the path loss is $\min\left\{1, x_{i,j}^{-\alpha^f}\right\}$  \cite{martin_book}. We assume that the distance between the receiver and the typical BS $d_{0,0}$ is fixed. We consider Rayleigh fading and assume that all channel gains are independent and identically distributed (i.i.d).
%
%
\subsection{Performance Metrics}
We consider two metrics to study the performance of this wireless powered backscatter communication network. The first metric is the coverage probability, $\mathbb{P}_s$, which is the success probability that the receiver $\text{R}_0$ can successfully decode  the signal of  $\text{BN}_0$. Given a threshold $\Theta_m$, we defined the coverage probability as
\begin{equation}
\mathbb{P}_s = \mathbb{P}\left[\gamma_{R_0}'\geq\Theta_{m}\right],
\label{success_prob}
\end{equation} 
which gives us the percentage of BNs having successful transmission in the network.  
Our second metric is the transmission capacity, $\mathbb{C}$, defined as
\begin{equation}
\mathbb{C} = \lambda_b \mathbb{P}_s,
\label{trans_capacity}
\end{equation}
which is the number of successful transmissions between the BNs and their corresponding receivers per unit area of the network. 
\section{Coverage Probability Analysis}
In this section, we find the coverage probability, $\mathbb{P}_s$, and the capacity, $\mathbb{C}$, of the network. For the coverage probability, we consider two cases. In the first case, we consider that a BN harvests energy from the nearest $N_p$ PBs as mentioned in the previous section, while for the second case, we assume that the BNs harvest energy from all the PBs in the network. The analysis depends on both the PPPs of PBs $\Phi_p$ and of the BNs $\Phi_b$, and both the forward channel $\textbf{h}_{i,j}^f$ and the backward channel $\textbf{h}_{k,i}^b$. Whereas, in conventional wireless network, we do not have this extra tier of PBs and both the BS and the MT have reliable power source. The downlink analysis usually depends on the PPP of the BSs and the channel between the MT and the BS, while the uplink analysis depends on the PPP of MTs and the channel. Thus having the additional tier of PBs and the backscatter communications by the BNs make the analysis very challenging. Therefore, in order to make the analysis tractable, we make some assumption as discussed in the sequel. The second case is more complicated than the first one, therefore, we find an approximation for the coverage probability. The details of the derivation are given in the following subsections.

\subsection{Coverage Probability when a BN harvests energy from $N_p$ nearby PBs}
The transmission of a typical BN is successful when the SINR $\gamma_{R_0}'$ exceeds the threshold $\Theta_m$. It is very difficult to find  the coverage probability of the network due to double fading (forward and backward channel) and its dependence on $\Phi_b$ and $N_p$ nearest PBs. Therefore, in order to make the analysis tractable, we use a slightly modified definition of the SINR. The received power at the typical BN $\left|\sum_{i=1}^{{N}_p}\sqrt{P_C}h_{0,i}^f x_{0,i}^{-\alpha^f/2}\right|^2$ is exponentially distributed with mean $P_C\sum_{i=1}^{{N}_p} x_{0,i}^{-\alpha^f}$. We replace the harvested power with its mean value in \eqref{sinr_first}. The accuracy of this assumption is verified through simulations in the next section. It is important to mention that by doing so, we take the expectation only with respect to the forward channel and not with the location of the $N_p$ PBs. Similar approaches have been used in the literature to reduce the dimensionality of the problem when the analysis is very complicated and leads to intractable results \cite{itlinq,analytical_D2D,joint_harpeet,jeff_tit,shotgun_cellular}. More specifically in \cite{itlinq}, and \cite{analytical_D2D}, the authors considered an averaging circle around the receiver of interest and  model the interferers inside this region while for the interference emanating from outside the region they used the expected value of it. Similarly, in \cite{joint_harpeet}, the authors only considered the effect of two nearest interferers and for the rest of the interference, its  average value is considered. 
We write the modified SINR $\gamma_{R_0}$ as 
\begin{equation}
\gamma_{R_0} = \frac{ \textbf{h}_{0,0}^b d_{0,0}^{-\alpha^b}\sum_{i=1}^{N_p} x_{0,i}^{-\alpha^f}}{\sum_{j\in\Phi_b\backslash BN_0}\textbf{h}_{0,j}^b d_{0,j}^{-\alpha^b}\sum_{k=1}^{N_p} x_{j,k}^{-\alpha^f}+\frac{N_0}{\beta P_C}}.
\label{sinr_2nd}
\end{equation}

Before deriving our main result for the coverage probability,  we first find the average received power by a random BN in the following lemma.
\begin{lem}
The average received power at  $\text{BN}_0$ when there are $N_p$ PBs in the harvesting zone is given by
\begin{equation}
\bar{P}_{{BN}_0}= P_C\left(2\pi\lambda_p\right)^{N_p}  \idotsint\displaylimits_{0<x_{0,1}<x_{0,2}<\cdots<x_{0,N_p}<\infty}x_{0,1},\dots, x_{0,N_p} \left(\sum_{i=1}^{{N}_p} x_{0,i}^{-\alpha^f}\right) \mathrm{e}^{-\pi\lambda_p x_{0,N_p}^2} \mathrm{d}x_{0,1},\cdots,\mathrm{d}x_{0,N_p}.
\label{average_power1}
\end{equation}
\label{avg_power_1}
\end{lem}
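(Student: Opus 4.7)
The plan is to compute the average received power in two stages, first averaging over the forward-channel fading and then over the joint spatial distribution of the $N_p$ nearest PBs. Starting from the expression
\begin{equation}
P_{BN_0} = \left|\sum_{i=1}^{N_p}\sqrt{P_C}\,h_{0,i}^f\,x_{0,i}^{-\alpha^f/2}\right|^2,
\notag
\end{equation}
I would first condition on the distances $\{x_{0,i}\}_{i=1}^{N_p}$ and take the expectation with respect to the forward-channel gains $h_{0,i}^f\sim\mathrm{CN}(0,1)$.

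Expanding the squared modulus yields a double sum of cross-terms $h_{0,i}^f(h_{0,j}^f)^{*}\,x_{0,i}^{-\alpha^f/2}x_{0,j}^{-\alpha^f/2}$. Using the i.i.d.\ assumption stated after \eqref{sinr_first} together with $\mathbb{E}[h_{0,i}^f(h_{0,j}^f)^{*}]=\delta_{ij}$ (since the $h_{0,i}^f$ are zero-mean complex Gaussian with unit variance), all off-diagonal terms vanish and the diagonal ones each contribute a factor of one. Consequently,
\begin{equation}
\mathbb{E}_{h^f}\!\left[P_{BN_0}\,\big|\,x_{0,1},\dots,x_{0,N_p}\right] = P_C\sum_{i=1}^{N_p}x_{0,i}^{-\alpha^f}.
\notag
\end{equation}

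The second step is to de-condition by averaging over the distances $x_{0,1}<\cdots<x_{0,N_p}$ using the joint PDF \eqref{joint_pdf_dist}. Applying the law of total expectation and plugging in $f_x(x_{0,1},\dots,x_{0,N_p})=(2\pi\lambda_p)^{N_p}x_{0,1}\cdots x_{0,N_p}\,\mathrm{e}^{-\pi\lambda_p x_{0,N_p}^2}$ over the ordered simplex $0<x_{0,1}<\cdots<x_{0,N_p}<\infty$ gives exactly the stated $N_p$-fold integral \eqref{average_power1}. No further manipulation is needed; the statement of the lemma is simply the explicit form of $\mathbb{E}_{h^f,x}[P_{BN_0}]$.

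There is no genuine obstacle in this proof, which is essentially a bookkeeping exercise. The only subtlety worth flagging is ensuring the exchange of expectation and the finite sum in the first step (justified trivially since $N_p$ is deterministic and finite by the construction of $r_h$ so that $N_p=\pi r_h^2\lambda_p$ is an integer) and then correctly transcribing the ordered-simplex domain of integration coming from the joint nearest-neighbour distance distribution. No integral evaluation is carried out at this stage; the lemma merely records the conditional-mean form of the received power which will serve as the starting point for the SINR analysis via the modified definition in \eqref{sinr_2nd}.
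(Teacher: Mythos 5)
Your proposal is correct and follows essentially the same route as the paper: average over the i.i.d.\ $\mathrm{CN}(0,1)$ forward-channel gains so that only the diagonal terms $P_C\sum_i x_{0,i}^{-\alpha^f}$ survive, then de-condition using the joint PDF \eqref{joint_pdf_dist} over the ordered distances. Your write-up merely spells out the vanishing of the cross-terms more explicitly than the paper does.
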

\begin{proof}
The average received power at a BN can written as
\begin{equation}
\bar{P}_{{BN}_0}=\mathbb{E}\left[P_{{BN}_0}\right] = \mathbb{E}_{h_i^f, x_{0,1},\cdots,x_{0,N_p}}\left[\left|\sum_{i=1}^{{N}_p}\sqrt{P_C}h_{0,i}^f x_{0,i}^{-\alpha^f/2}\right|^2\right] \overset{a}=\mathbb{E}_{h_{0,i}^f, x_{0,1},\cdots,x_{0,N_p}}\left[P_C\sum_{i=1}^{{N}_p} x_{0,i}^{-\alpha^f}\right], 
\end{equation}
where $\left(a\right)$ follows because $h_{0,i}^f$ is i.i.d distributed  $\mathrm{CN}\left(0,1\right)$ and after taking the expectation with respect to $x_{0,1},\cdots,x_{0,N_p}$ we obtained the final result. 
\end{proof}
It can be noticed from \eqref{average_power1} that $\bar{P}_{{BN}_0}$ increases with an increase in $P_C$, $N_p$ and $\lambda_p$.  It is important to remind that on average, a BN harvests $\left(1-\beta\right)\bar{P}_{{BN}_0}$ while $\beta \bar{P}_{{BN}_0}$ is reflected back through backscatter modulation. It should be noted also that we do not use the dimensionality reduction in the proof of Lemma \ref{avg_power_1} and the expression in \eqref{average_power1}  is exact.

It should be noted that our work significantly differs from \cite{coordinated_joint_trans, spatiotemporal_cooperation}. Both \cite{coordinated_joint_trans, spatiotemporal_cooperation} consider a cellular network while we consider an ad-hoc setup. In our work, each BN harvests energy from the PBs in its harvesting zone and then backscatters the signal to the receiver, while in \cite{coordinated_joint_trans, spatiotemporal_cooperation} multiple BSs cooperatively transmit to a receiver. Moreover, the authors \cite{coordinated_joint_trans, spatiotemporal_cooperation} first map the PPPs to a single one-dimensional PPP and then carry out the analysis, while we directly apply the probability generating functional (PGFL) of the 2-D PPP in the proof of Theorem \ref{theorem1} and Corollary \ref{cor_1}.  We present the SINR coverage probability in the following theorem.
\begin{thm}
The SINR coverage probability $\mathbb{P}_s$ when the BNs harvests from the nearest $N_p$ PBs is given by
\begin{equation}
\mathbb{P}_s = \idotsint\displaylimits_{0<x_{0,1}<x_{0,2}<\cdots<x_{0,N_p}<\infty} \exp(-\frac{s N_0}{\beta P_C})\mathcal{L}_I(s)f_x(x_{0,1},\cdots,x_{0,N_p})\mathrm{d}x_{0,1},\cdots,\mathrm{d}x_{0,N_p}
\label{thm1}
\end{equation}
where $s=\frac{\Theta_{m}d_{0,0}^{\alpha^f}}{\sum_{i=1}^{N_p}x_{0,i}^{-\alpha^f}}$, $\mathcal{L}_I(s)$ is the Laplace transform of the interference and is given by 
\begin{equation}
\mathcal{L}_I(s) =\exp\left(-\frac{2\pi^2 \lambda_b s^{2/\alpha^b}}{\alpha^b \sin\left(\frac{2 \pi}{\alpha^b}\right)} \mathbb{E}_{x_1,\cdots,x_{N_p}}\left(\sum_{k=1}^{N_p} x_k^{-\alpha^f}\right)^{2/\alpha^b}\right),
\label{lap_thm1_final}
\end{equation}
and $f_x(x_{0,1},\cdots,x_{0,N_p})$ is the joint PDF of the distances to the $N_p$ nearest PBs and given is in \eqref{joint_pdf_dist}. 
\label{theorem1}
\end{thm}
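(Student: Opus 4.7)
The plan is to follow the standard Rayleigh-fading template for SINR coverage. First I would condition on the distances $(x_{0,1},\dots,x_{0,N_p})$ to the nearest $N_p$ PBs, which makes $S\triangleq\sum_{i=1}^{N_p}x_{0,i}^{-\alpha^f}$ deterministic. The coverage event $\gamma_{R_0}\ge\Theta_m$ then rearranges into $\mathbf{h}_{0,0}^b\ge s\,(I+N_0/(\beta P_C))$ with $s$ as in the statement and
\[ I=\sum_{j\in\Phi_b\setminus\text{BN}_0}\mathbf{h}_{0,j}^b\,d_{0,j}^{-\alpha^b}\,T_j,\qquad T_j\triangleq\sum_{k=1}^{N_p}x_{j,k}^{-\alpha^f}. \]
Because $\mathbf{h}_{0,0}^b\sim\exp(1)$ and is independent of $I$, the conditional coverage probability factorises as $\exp(-sN_0/(\beta P_C))\,\mathcal{L}_I(s)$, and averaging against the joint PDF \eqref{joint_pdf_dist} produces the outer integral in \eqref{thm1}. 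Everything then reduces to computing $\mathcal{L}_I(s)$.

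For the Laplace transform, I would first take the expectation of each per-interferer factor $\exp(-s\mathbf{h}_{0,j}^b|y|^{-\alpha^b}T_j)$ over the Rayleigh backward fading, turning it into $1/(1+s|y|^{-\alpha^b}T_j)$. Then I would apply the PGFL of $\Phi_b$ while treating the marks $\{T_j\}$ as i.i.d.\ copies of $T=\sum_{k=1}^{N_p}x_k^{-\alpha^f}$. This is legitimate at the marginal level: because $\Phi_p$ is stationary and independent of $\Phi_b$, the joint law of the nearest-$N_p$ distances seen from any interferer coincides with \eqref{joint_pdf_dist}. The result is
\[ \mathcal{L}_I(s)=\exp\!\left(-\lambda_b\int_{\mathbb{R}^2}\mathbb{E}_T\!\left[1-\frac{1}{1+s|y|^{-\alpha^b}T}\right]\mathrm{d}y\right). \]

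Finally, I would pass to polar coordinates, exchange $\mathbb{E}_T$ with the nonnegative radial integrand, and invoke the standard identity $\int_0^\infty(1-(1+ar^{-\alpha^b})^{-1})\,2\pi r\,\mathrm{d}r=\frac{2\pi^2 a^{2/\alpha^b}}{\alpha^b\sin(2\pi/\alpha^b)}$ with $a=sT$. Factoring out $s^{2/\alpha^b}$ and substituting $T=\sum_k x_k^{-\alpha^f}$ reproduces the expression in \eqref{lap_thm1_final}. The main obstacle is the i.i.d.-mark step in the PGFL: different interferers' values of $T_j$ share the single PPP $\Phi_p$ and are therefore correlated, so moving the $\Phi_p$-expectation inside the exponential amounts to a marginalisation that echoes the dimensionality-reduction argument already used in passing from \eqref{sinr_first} to \eqref{sinr_2nd}. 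Once this point is granted, the remaining manipulations are routine.
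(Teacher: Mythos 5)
Your proposal follows essentially the same route as the paper's Appendix A: condition on the nearest-PB distances, use $\mathbf{h}_{0,0}^b\sim\exp(1)$ to factor the conditional coverage into $\exp(-sN_0/(\beta P_C))\mathcal{L}_I(s)$, then evaluate $\mathcal{L}_I(s)$ via the Rayleigh expectation, the PGFL of $\Phi_b$ with the $T_j$ treated as i.i.d.\ marks, and the standard radial integral identity. The one step you flag as delicate --- pushing the $\Phi_p$-expectation inside the product even though the $T_j$ of different interferers are built from the same realisation of $\Phi_p$ --- is exactly the step the paper also takes (implicitly, at step $(b)$ of its derivation of $\mathcal{L}_I$), so your account is faithful to, and if anything more candid than, the paper's own proof.
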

\begin{proof}
The proof is provided in Appendix A. 
\end{proof}
\begin{remark}
(Effect of $\beta$ and $P_C$ on the coverage probability) It can be observed from \eqref{thm1} that the coverage probability $\mathbb{P}_s$ increases by increasing the backscattering efficiency $\beta$ or by increasing the transmit power $P_C$ of the PBs.
\label{remark1}
\end{remark}
We observe that, as the number of PBs $N_p$ in the harvesting region increases, the numerical computation of the coverage probability $\mathbb{P}_s$ becomes very tedious. This is due to the integration over the joint PDF of the distances to the $N_p$ nearest PBs both in the expression for the  $\mathcal{L}_I(s)$ and in the final expression of \eqref{thm1}. Nonetheless, the coverage probability $\mathbb{P}_s$ can be simplified for the following plausible special cases. 
\begin{cor}
When the BN harvests from the single nearest PB $\left(N_p=1\right)$ and $\alpha^f=\alpha^b=\alpha$, then the coverage probability simplifies to
\begin{equation}
\mathbb{P}_s =2\pi \lambda_P \int_0^\infty x_{0,1} \exp\left(-\left[\frac{s N_0}{\beta P_C}+\pi \lambda_p x_{0,1}^{2} + \frac{2\pi^2 \lambda_b s^{2/\alpha}}{\alpha \sin\left(\frac{2 \pi}{\alpha}\right)}\left[1-\mathrm{e}^{-\pi \lambda_p}-\pi \lambda_p \mathrm{Ei}\left(-\pi \lambda_p\right)\right]\right]\right)\mathrm{d}x_{0,1},
\label{cor1}
\end{equation}
where $s=\Theta_{m}d_{0,0}^{\alpha}x_{0,1}^{\alpha}$ and $\mathrm{Ei}\left(z\right)=\int_z^{\infty}\mathrm{e}^{-t}t^{-1}\mathrm{d}t$ is the exponential integral function \cite{book_integral}. The coverage probability can be computed by evaluating just a single integral.
\label{cor_1}
\end{cor}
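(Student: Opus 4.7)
The plan is to simply specialize Theorem \ref{theorem1} to $N_p=1$ and $\alpha^f=\alpha^b=\alpha$, and then do the one remaining integral inside $\mathcal{L}_I(s)$ in closed form. With $N_p=1$, the joint density \eqref{joint_pdf_dist} collapses to the nearest-neighbour distance density $f_{x}(x_{0,1})=2\pi\lambda_p x_{0,1}\mathrm{e}^{-\pi\lambda_p x_{0,1}^2}$, the sum in the SINR threshold variable reduces to a single term so $s=\Theta_m d_{0,0}^{\alpha}x_{0,1}^{\alpha}$, and the outer $N_p$-fold integral in \eqref{thm1} becomes a single integral in $x_{0,1}$, which is exactly the form appearing in \eqref{cor1}. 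So the only real work is to show that the interference Laplace exponent in \eqref{lap_thm1_final} simplifies to the bracketed constant $1-\mathrm{e}^{-\pi\lambda_p}-\pi\lambda_p\,\mathrm{Ei}(-\pi\lambda_p)$.

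For that step I would substitute $N_p=1$ and $\alpha^f=\alpha^b=\alpha$ into the expectation in \eqref{lap_thm1_final}, giving $\mathbb{E}_{x_1}\bigl[(x_1^{-\alpha})^{2/\alpha}\bigr]$, and then invoke the bounded path-loss convention $\min\{1,x^{-\alpha}\}$ stated after \eqref{sinr_first} so that the expectation becomes $\mathbb{E}_{x_1}[\min\{1,x_1^{-2}\}]$ and is well defined near the origin (without the bounded model the $x^{-2}$ singularity would make the integral diverge at $0$). Using the nearest-neighbour density again, I would split at $x_1=1$:
\begin{equation*}
\mathbb{E}_{x_1}[\min\{1,x_1^{-2}\}]=\int_0^1 2\pi\lambda_p x\,\mathrm{e}^{-\pi\lambda_p x^2}\mathrm{d}x+\int_1^\infty \frac{2\pi\lambda_p}{x}\,\mathrm{e}^{-\pi\lambda_p x^2}\mathrm{d}x.
\end{equation*}
The first integral is elementary via $u=\pi\lambda_p x^2$ and equals $1-\mathrm{e}^{-\pi\lambda_p}$. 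For the second, the same substitution converts it into $\pi\lambda_p\int_{\pi\lambda_p}^{\infty}\frac{\mathrm{e}^{-t}}{t}\mathrm{d}t$, which matches the exponential integral form $-\pi\lambda_p\,\mathrm{Ei}(-\pi\lambda_p)$ used in the paper's notation. Summing the two pieces yields the bracketed factor in \eqref{cor1}.

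Finally I would substitute this closed-form expectation back into \eqref{lap_thm1_final}, multiply by $\exp(-sN_0/(\beta P_C))$ and the single-variable density $f_x(x_{0,1})$, and integrate over $x_{0,1}\in(0,\infty)$ to recover exactly \eqref{cor1}. The main (very mild) obstacle is the second piece of the split integral; the $1/x$ factor combined with a Gaussian weight is precisely the combination that produces the exponential-integral term, so recognising that form — and being careful with the sign convention for $\mathrm{Ei}$ on a negative argument — is the only point that requires attention. Everything else is bookkeeping from Theorem \ref{theorem1}.
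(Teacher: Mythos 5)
Your proposal is correct and follows essentially the same route as the paper's own proof: specialize Theorem \ref{theorem1} to $N_p=1$ with the marginal nearest-neighbour density, invoke the bounded path-loss model to split $\mathbb{E}_{x_1}[x_1^{-2}]$ at $x_1=1$, and evaluate the two pieces as $1-\mathrm{e}^{-\pi\lambda_p}$ and the exponential-integral term $-\pi\lambda_p\,\mathrm{Ei}(-\pi\lambda_p)$. Your remark about the sign convention for $\mathrm{Ei}$ on a negative argument is well taken, since the paper's stated definition of $\mathrm{Ei}$ is only consistent with its quoted value under the standard convention $-\mathrm{Ei}(-z)=\int_z^{\infty}\mathrm{e}^{-t}t^{-1}\mathrm{d}t$.
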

\begin{proof}
The proof follows similar steps as in the proof of Theorem 1. However, to be self-contained, we provide the main steps here. 
By using the marginal PDF of $x_{0,1}$, we write $\mathbb{P}_s$ as
\begin{equation}
\mathbb{P}_s = 2\pi \lambda_P \int_0^\infty x_{0,1} \mathbb{P}\left[\gamma_{R_0}\geq\Theta_{m} | x_{0,1} \right] \exp\left(-\pi\lambda_p x_{0,1}^2\right)\mathrm{d}x_{0,1},
\label{cor1_proof_eq1}
\end{equation}
where, by utilizing the definition of $\gamma_{R_0}$, the probability $\mathbb{P}\left[\gamma_{R_0}\geq\Theta_{m} | x_{0,1} \right]$ can be written as
\begin{multline}
\mathbb{P}\left[\frac{ \textbf{h}_{0,0}^b d_{0,0}^{-\alpha} x_{0,1}^{-\alpha}}{\sum_{j\in\Phi_b\backslash BN_0}\textbf{h}_{0,j}^b d_{0,j}^{-\alpha} x_{j,1}^{-\alpha}+\frac{N_0}{\beta P_C}} \geq \Theta_{m} \right] = \mathbb{P}\left[\textbf{h}_{0,0}^b \geq s \left(I+\frac{N_0}{\beta P_C}\right)\right]= \exp(-\frac{s N_0}{\beta P_C})\mathcal{L}_I(s),
\label{cor1_proof_eq2}
\end{multline}
with $s=\Theta_{m}d_{0,0}^{\alpha}x_{0,1}^{\alpha}$ and $I = \sum_{j\in\Phi_b\backslash BN_0}\textbf{h}_{0,j}^b d_{0,j}^{-\alpha} x_{j,1}^{-\alpha}$. Now, to find $\mathcal{L}_I(s)$, we apply the same steps as we did  to obtain \eqref{lap_thm1} in the proof of Theorem \ref{theorem1} and we get 
\begin{multline}
\mathcal{L}_I(s) = \exp\left(-\frac{2\pi^2 \lambda_b s^{2/\alpha}}{\alpha \sin\left(\frac{2 \pi}{\alpha}\right)} \mathbb{E}_{x_1}\left( x_1^{-2}\right)\right) \\ = \exp\left(-\frac{2\pi^2 \lambda_b s^{2/\alpha}}{\alpha \sin\left(\frac{2 \pi}{\alpha}\right)} \left(2\pi \lambda_p \int_0^1  x_1 \mathrm{e}^{-\pi\lambda_p x_1^2} \mathrm{d}x_{1}+ 2\pi \lambda_p \int_1^\infty  x_1^{-1} \mathrm{e}^{-\pi\lambda_p x_1^2} \mathrm{d}x_{1} \right)\right),
\label{cor1_proof_eq3}
\end{multline}
where the last expression is obtained by utilizing the bounded path loss model for $\mathbb{E}_{x_1}\left( x_1^{-2}\right)$. The first integral  $2\pi \lambda_p \int_0^1  x_1 \mathrm{e}^{-\pi\lambda_p x_1^2} \mathrm{d}x_{1}= 1-\mathrm{e}^{-\pi \lambda_p}$, and the second integral is an exponential integral and  $2\pi \lambda_p \int_1^\infty  x_1^{-1} \mathrm{e}^{-\pi\lambda_p x_1^2} \mathrm{d}x_{1} =-\pi \lambda_p \mathrm{Ei}\left(-\pi \lambda_p\right)$, where  $\mathrm{Ei}\left(z\right)$ is the exponential integral function and  $\mathrm{Ei}\left(z\right)=\int_z^{\infty}\mathrm{e}^{-t}t^{-1}\mathrm{d}t$ \cite{book_integral}. Thus evaluating both integrals in the above expression, plugging  \eqref{cor1_proof_eq3} in \eqref{cor1_proof_eq2} and then plugging back \eqref{cor1_proof_eq2} in \eqref{cor1_proof_eq1}  and doing some manipulations, we obtain \eqref{cor1}.
\end{proof}
\begin{remark}
(Effect of $\lambda_p$ on the coverage probability) The density of the PBs $\lambda_p$ appears both inside the exponential and outside the exponential in \eqref{cor1}. We expect that when $\Theta_m$ is small then the $\lambda_p$ outside the exponential dominates and as a result the coverage probability increases with an increase in $\lambda_p$. However, when $\Theta_m$ is large enough then the exponential-term dominates due to which the coverage probability decreases with an increase in $\lambda_p$.
\label{remark2}
\end{remark}
\begin{remark}
(Effect of $\lambda_b$ on the coverage probability) The density of the BNs $\lambda_b$ is only inside the exponential in \eqref{cor1}, which suggests that an increase in $\lambda_b$ always decreases the coverage probability. 
\label{remark3}
\end{remark}
\begin{remark}
(Effect of $\lambda_b$ on the capacity of the network) The capacity $\mathbb{C} = \lambda_b \mathbb{P}_s$, which shows that the density of BNs $\lambda_b$ is both inside the exponential term (in the expression for $\mathbb{P}_s$) and outside the exponential term. This means that the $\lambda_b$ outside and inside exponential affects the capacity $\mathbb{C}$ in a completely different way. We expect that for small $\Theta_m$ it will increase $\mathbb{C}$, whereas for large $\Theta_m$ it will decrease $\mathbb{C}$.
\label{remark4}
\end{remark}
\begin{cor}
When $N_p=1, N_0=0$ and $\alpha^f=\alpha^b=\alpha$ then \eqref{cor1} can be further simplified to 
\begin{equation}
\mathbb{P}_s =2\pi \lambda_P \int_0^\infty x_{0,1} \exp\left(-\left[\pi \lambda_p x_{0,1}^{2} + \frac{2\pi^2 \lambda_b s^{2/\alpha}}{\alpha \sin\left(\frac{2 \pi}{\alpha}\right)}\left[1-\mathrm{e}^{-\pi \lambda_p}-\pi \lambda_p \mathrm{Ei}\left(-\pi \lambda_p\right)\right]\right]\right)\mathrm{d}x_{0,1},
\label{cor2}
\end{equation}
where $s=\Theta_{m}d_{0,0}^{\alpha}x_{0,1}^{\alpha}$. The coverage probability becomes independent of the transmit power $P_C$ of the PB  and the backscattering efficiency $\beta$. 
\label{cor_2}
\end{cor}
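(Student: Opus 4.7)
The plan is to derive Corollary 2 as an immediate specialization of Corollary 1, since the hypotheses $N_p = 1$ and $\alpha^f = \alpha^b = \alpha$ are identical, and only the extra assumption $N_0 = 0$ is added. So there is nothing substantially new to prove; the task is just to track what happens to each term when the noise variance is set to zero.

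First I would recall the expression for $\mathbb{P}_s$ in \eqref{cor1}, where the integrand's exponent consists of three summands: the noise-related term $\frac{s N_0}{\beta P_C}$, the Poisson void-probability factor $\pi \lambda_p x_{0,1}^{2}$ from the distance PDF, and the interference term $\frac{2\pi^2 \lambda_b s^{2/\alpha}}{\alpha \sin(2\pi/\alpha)}\bigl[1-\mathrm{e}^{-\pi \lambda_p}-\pi \lambda_p \mathrm{Ei}(-\pi \lambda_p)\bigr]$. The only place where $P_C$ and $\beta$ appear in the whole expression is within the noise term $\frac{s N_0}{\beta P_C}$: $\beta$ enters because it scales the transmit power in the denominator of the modified SINR definition \eqref{sinr_2nd}, and $P_C$ enters through the same denominator after the dimensionality reduction.

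Next I would substitute $N_0 = 0$ directly into \eqref{cor1}. The factor $\exp\bigl(-\frac{s N_0}{\beta P_C}\bigr)$ collapses to $1$, removing both $P_C$ and $\beta$ from the integrand. What remains is exactly the expression \eqref{cor2}, with $s = \Theta_m d_{0,0}^{\alpha} x_{0,1}^{\alpha}$ unchanged, since $s$ was defined without reference to $P_C$ or $\beta$.

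There is no real obstacle here: the corollary is purely an observation that in the noise-free regime, the SINR reduces to an SIR, and both the signal and interference powers scale identically with $\beta P_C$, so these parameters cancel. The only thing to be careful about is to confirm that $P_C$ and $\beta$ do not sneak into $\mathcal{L}_I(s)$ or into the joint distance PDF; inspection of \eqref{lap_thm1_final} and \eqref{joint_pdf_dist} confirms they do not. Hence the simplification is immediate and the resulting expression is independent of $P_C$ and $\beta$, as claimed.
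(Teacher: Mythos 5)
Your proposal is correct and matches the paper's (implicit) argument: Corollary~\ref{cor_2} is obtained from \eqref{cor1} simply by setting $N_0=0$, which kills the factor $\exp\left(-\frac{sN_0}{\beta P_C}\right)$, the only place where $P_C$ and $\beta$ appear. Your additional check that neither parameter hides in $\mathcal{L}_I(s)$, in $s$, or in the distance PDF is exactly the right due diligence, and nothing further is needed.
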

\begin{cor}
When $N_0=0$, but the BNs harvest energy from multiple nearby PBs then the coverage probability simplifies to 
\begin{multline}
\mathbb{P}_s = \idotsint\displaylimits_{0<x_{0,1}<x_{0,2}<\cdots<x_{0,N_p}<\infty}\exp\left(-\frac{2\pi^2 \lambda_b s^{2/\alpha^b}}{\alpha^b \sin\left(\frac{2 \pi}{\alpha^b}\right)} \mathbb{E}_{x_1,\cdots,x_{N_p}}\left(\sum_{k=1}^{N_p} x_k^{-\alpha^f}\right)^{2/\alpha^b}\right) \times \\ f_x(x_{0,1},\cdots,x_{0,N_p})\mathrm{d}x_{0,1}\cdots\mathrm{d}x_{0,N_p},
\label{cor3}
\end{multline}
where $s=\frac{\Theta_{m}d_{0,0}^{\alpha^b}}{\sum_{i=1}^{N_p}x_{0,i}^{-\alpha^f}}$ and $\mathbb{P}_s$ is independent of the transmit power $P_C$ of PBs and the backscattering efficiency $\beta$.
\label{cor_3}
\end{cor}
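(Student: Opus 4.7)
The plan is to obtain Corollary \ref{cor_3} as an immediate specialization of Theorem \ref{theorem1}. Concretely, I would start from the general expression
\[
\mathbb{P}_s = \idotsint\displaylimits_{0<x_{0,1}<\cdots<x_{0,N_p}<\infty} \exp\!\left(-\frac{s N_0}{\beta P_C}\right)\mathcal{L}_I(s)\,f_x(x_{0,1},\cdots,x_{0,N_p})\,\mathrm{d}x_{0,1}\cdots\mathrm{d}x_{0,N_p},
\]
and then set $N_0=0$. The first exponential factor collapses to $1$, which eliminates the only place where $P_C$ and $\beta$ enter the expression, since $s=\Theta_m d_{0,0}^{\alpha^b}/\sum_i x_{0,i}^{-\alpha^f}$ and $\mathcal{L}_I(s)$ depend on neither quantity.

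Next, I would substitute the closed form of $\mathcal{L}_I(s)$ given in \eqref{lap_thm1_final} of Theorem \ref{theorem1},
\[
\mathcal{L}_I(s) = \exp\!\left(-\frac{2\pi^2 \lambda_b s^{2/\alpha^b}}{\alpha^b \sin(2\pi/\alpha^b)}\,\mathbb{E}_{x_1,\cdots,x_{N_p}}\!\left(\sum_{k=1}^{N_p} x_k^{-\alpha^f}\right)^{2/\alpha^b}\right),
\]
directly into the outer integral. This yields precisely the expression claimed in \eqref{cor3}, with the same joint PDF $f_x$ of the distances to the $N_p$ nearest PBs and the same integration region.

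Finally, I would invoke Remark \ref{remark1} in reverse: since $\beta$ and $P_C$ had entered the coverage probability only through the noise term $\tfrac{s N_0}{\beta P_C}$, removing the noise ($N_0=0$) makes $\mathbb{P}_s$ independent of both, yielding the second claim of the corollary. There is essentially no obstacle here; the statement is a direct specialization, and the only thing to be careful about is reproducing the precise form of $s$ with the correct exponents ($\alpha^b$ on $d_{0,0}$ and $\alpha^f$ inside the sum), as used throughout the proof of Theorem \ref{theorem1}.
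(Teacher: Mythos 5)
Your proposal is correct and is exactly the derivation the paper intends (it gives no separate proof for this corollary): setting $N_0=0$ in Theorem~\ref{theorem1} kills the factor $\exp(-sN_0/(\beta P_C))$, which is the only place $P_C$ and $\beta$ appear, and substituting $\mathcal{L}_I(s)$ from \eqref{lap_thm1_final} gives \eqref{cor3} directly. Your care about the exponent on $d_{0,0}$ is warranted --- the theorem statement writes $d_{0,0}^{\alpha^f}$ while the Appendix~A proof and the corollary use $d_{0,0}^{\alpha^b}$, and the latter is the correct one.
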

\begin{cor}
The coverage probability can be further simplified to 
\begin{equation}
\mathbb{P}_s = \exp\left(-s \left[\frac{ N_0}{\beta \bar{P}{{BN}_0}} + \frac{2 \pi^2 \lambda_b s^{2/\alpha^b-1}}{\alpha^b \sin\left(\frac{2 \pi}{\alpha^b}\right)}\right]\right),
\label{cor4}
\end{equation}
if the BNs reflect the average power $\beta\bar{P}_{{BN}_0}$  instead of the instantaneous power $\beta P_{{BN}_0}$.  The coverage probability is in closed form. The network behaves like a conventional ad-hoc network with BNs having the transmit power $\beta \bar{P}_{{BN}_0}$  and $s=\Theta_m d_{0,0}^{\alpha^b}$.
\label{cor_4}
\end{cor}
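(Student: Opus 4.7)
The plan is to exploit the fact that, once every BN transmits with the deterministic power $\beta\bar{P}_{{BN}_0}$, both the signal and the interference lose their dependence on $\Phi_p$ and on the forward channels $h^f$, reducing the problem to a textbook Poisson ad-hoc network with Rayleigh fading. So the first step is to rewrite the SINR: under the stated assumption, the numerator becomes $\beta\bar{P}_{{BN}_0}\,\mathbf{h}_{0,0}^b d_{0,0}^{-\alpha^b}$ and the interference becomes $\beta\bar{P}_{{BN}_0}\sum_{j\in\Phi_b\setminus BN_0}\mathbf{h}_{0,j}^b d_{0,j}^{-\alpha^b}$, so after dividing through by $\beta\bar{P}_{{BN}_0}$ the coverage event reads
\begin{equation*}
\mathbf{h}_{0,0}^b \;\geq\; s\left(I+\tfrac{N_0}{\beta\bar{P}_{{BN}_0}}\right), \qquad s=\Theta_m d_{0,0}^{\alpha^b},\quad I=\sum_{j\in\Phi_b\setminus BN_0}\mathbf{h}_{0,j}^b d_{0,j}^{-\alpha^b}.
\end{equation*}

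The second step is to use Rayleigh fading on the intended link: since $\mathbf{h}_{0,0}^b\sim\mathrm{Exp}(1)$, taking the expectation over $\mathbf{h}_{0,0}^b$ and then over everything else gives $\mathbb{P}_s = \exp\!\bigl(-sN_0/(\beta\bar{P}_{{BN}_0})\bigr)\,\mathcal{L}_I(s)$, exactly as in the proof of Corollary \ref{cor_1}.

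The third step is to evaluate $\mathcal{L}_I(s)$ by the PGFL of the 2-D homogeneous PPP $\Phi_b$, followed by the expectation over the i.i.d. $\mathrm{Exp}(1)$ fades $\mathbf{h}_{0,j}^b$. This yields the standard integral
\begin{equation*}
\mathcal{L}_I(s) \;=\; \exp\!\left(-2\pi\lambda_b\int_0^{\infty}\!\frac{s\,r^{1-\alpha^b}}{1+s\,r^{-\alpha^b}}\,\mathrm{d}r\right),
\end{equation*}
and the change of variable $u=r\,s^{-1/\alpha^b}$ reduces the inner integral to the well-known form $\int_0^\infty u(1+u^{\alpha^b})^{-1}\mathrm{d}u = \pi/\bigl(\alpha^b\sin(2\pi/\alpha^b)\bigr)$. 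Combining the noise factor with $\mathcal{L}_I(s)$ and pulling $s$ out of the bracket as a common prefactor produces the claimed closed form.

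There is essentially no technical obstacle: the whole derivation is a direct specialization of Theorem \ref{theorem1} where the messy joint expectation over $(x_{0,1},\dots,x_{0,N_p})$ and over the forward fades vanishes because the effective transmit power is now a deterministic constant. The only point that requires a brief comment is interpretational, namely that the resulting expression is precisely the coverage of a classical Poisson bipolar ad-hoc network whose common transmit power is $\beta\bar{P}_{{BN}_0}$, which is exactly what the corollary asserts.
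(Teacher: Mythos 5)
Your proof is correct and follows essentially the same route as the paper: substitute the deterministic power $\beta\bar{P}_{{BN}_0}$ into the SINR so that the dependence on $\Phi_p$ and the forward fades disappears, then apply the exponential CCDF of $\mathbf{h}_{0,0}^b$ and the PGFL of $\Phi_b$ exactly as in Theorem \ref{theorem1}. The paper states this in one line and omits the computation; your evaluation of $\mathcal{L}_I(s)$ and the substitution $u=r\,s^{-1/\alpha^b}$ are the details it leaves implicit, and they check out.
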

\begin{proof}
We first put the average harvested power  $\bar{P}_{{BN}_0}$ instead of the instantaneous $\beta P_{{BN}_0}$  in \eqref{sinr_2nd} and the rest of the proof follows similar steps as in Theorem $1$.
\end{proof}
\begin{cor}
When $N_0=0$ and the BNs reflect the average power $\beta \bar{P}_{{BN}_0}$, then the coverage probability further simplifies to 
\begin{equation}
\mathbb{P}_s = \exp\left(-\frac{2 \pi^2 \lambda_b \Theta_m^{2/\alpha^b} d_{0,0}^2}{\alpha^b \sin\left(\frac{2 \pi}{\alpha^b}\right)}\right),
\label{cor5}
\end{equation}
which depends only on the path loss exponent $\alpha^b$, density of the BN $\lambda_b$, and the distance $d_{0,0}$ between the BN and its receiver. The network acts like  a conventional ad-hoc network and the coverage is independent of the transmit power of the BNs.
\label{cor_5}
\end{cor}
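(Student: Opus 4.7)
The plan is to specialize Corollary \ref{cor_4} to the case $N_0=0$, which is a purely algebraic reduction rather than a new stochastic-geometric derivation. By Corollary \ref{cor_4}, under the assumption that the BNs reflect the deterministic average power $\beta\bar{P}_{BN_0}$, the coverage probability is
\begin{equation*}
\mathbb{P}_s = \exp\left(-s\left[\frac{N_0}{\beta\bar{P}_{BN_0}} + \frac{2\pi^2\lambda_b\,s^{2/\alpha^b-1}}{\alpha^b\sin(2\pi/\alpha^b)}\right]\right),
\end{equation*}
where $s=\Theta_m d_{0,0}^{\alpha^b}$. Setting $N_0=0$ kills the first term inside the bracket, so the exponent collapses to $-s\cdot \frac{2\pi^2\lambda_b\,s^{2/\alpha^b-1}}{\alpha^b\sin(2\pi/\alpha^b)}=-\frac{2\pi^2\lambda_b\,s^{2/\alpha^b}}{\alpha^b\sin(2\pi/\alpha^b)}$.

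Next I would substitute $s=\Theta_m d_{0,0}^{\alpha^b}$ and use $s^{2/\alpha^b}=\Theta_m^{2/\alpha^b}d_{0,0}^{2}$, which immediately yields the claimed closed form
\begin{equation*}
\mathbb{P}_s = \exp\left(-\frac{2\pi^2\lambda_b\,\Theta_m^{2/\alpha^b}d_{0,0}^2}{\alpha^b\sin(2\pi/\alpha^b)}\right).
\end{equation*}
I would then observe, as the statement notes, that $P_C$, $\beta$, $\lambda_p$, $N_p$, and $\alpha^f$ all dropped out of the expression: the transmit-power parameters entered the formula only via the noise-to-signal ratio $N_0/(\beta\bar{P}_{BN_0})$, which has been eliminated with $N_0=0$, and the forward-channel parameters entered only through $\bar{P}_{BN_0}$, which appears only in that same noise term. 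What remains is formally identical to the SINR coverage probability of a classical Poisson bipolar ad-hoc network with Rayleigh fading, path-loss exponent $\alpha^b$, transmitter density $\lambda_b$, and fixed link distance $d_{0,0}$.

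There is no real obstacle here; the only subtlety worth writing out explicitly is the algebraic combination $s\cdot s^{2/\alpha^b-1}=s^{2/\alpha^b}$ and the resulting cancellation of $d_{0,0}^{\alpha^b}$ with the fractional exponent to leave $d_{0,0}^2$. For clarity I would also emphasize that the interference-limited regime is what washes out the WPT-specific quantities, explaining intuitively why the network behaves like a conventional ad-hoc network once noise is neglected and the reflected power is replaced by its mean.
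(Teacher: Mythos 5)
Your proposal is correct and follows exactly the route the paper intends: Corollary \ref{cor_5} is an immediate specialization of Corollary \ref{cor_4} obtained by setting $N_0=0$, combining $s\cdot s^{2/\alpha^b-1}=s^{2/\alpha^b}$, and substituting $s=\Theta_m d_{0,0}^{\alpha^b}$ so that $s^{2/\alpha^b}=\Theta_m^{2/\alpha^b}d_{0,0}^2$. The paper omits an explicit proof precisely because the reduction is this direct, and your observation about why the WPT-specific parameters drop out matches the paper's remark that the network then behaves like a conventional ad-hoc network.
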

\subsection{Coverage Probability when a BN harvests energy from all PBs}
In contrast to the previous subsection, here we assume that a BN harvests energy from all the PBs in the network. We write the received power at the typical BN as
\begin{equation}
P_{{BN}_0} = \left|\sum_{i\in\Phi_p}\sqrt{P_C} h_{0,i}^f x_{0,i}^{-\alpha^f/2}\right|^2,
\label{px_pb_2}
\end{equation} 
where the summation is over the $\Phi_p$. Similarly, we write the SINR $\gamma_{R_0}'$ as
\begin{equation}
\gamma_{R_0}' = \frac{\beta \textbf{h}_{0,0}^b d_{0,0}^{-\alpha^b}\left|\sum_{i\in\Phi_p}\sqrt{P_C} h_{0,i}^f x_{0,i}^{-\alpha^f/2}\right|^2}{\beta \sum_{j\in\Phi_b\backslash BN_0}\textbf{h}_{0,j}^b d_{0,j}^{-\alpha^b}\left|\sum_{k\in\Phi_p}\sqrt{P_C}h_{j,k}^f x_{j,k}^{-\alpha^f/2}\right|^2+N_0}.
\label{sinr_second}
\end{equation}
It is more challenging than the previous section to find the coverage probability based on the above SINR expression because it depends on the PPPs of both PBs and BNs, and both the forward and backward channel. Therefore, in order to make the analysis tractable,  we make some modifications to the above expression and consider SIR instead of SINR. We write the modified expression for SIR as
\begin{equation}
\gamma_{R_0} = \frac{ \textbf{h}_{0,0}^b d_{0,0}^{-\alpha^b}\sum_{i\in\Phi_p} x_{0,i}^{-\alpha^f}}{\sum_{j\in\Phi_b\backslash BN_0}\textbf{h}_{0,j}^b d_{0,j}^{-\alpha^b}\sum_{k\in\Phi_p}x_{j,k}^{-\alpha^f}},
\label{sir_second}
\end{equation} 
where we use the average of $\left|\sum_{i\in\Phi_p}\sqrt{P_C} h_{0,i}^f x_{0,i}^{-\alpha^f/2}\right|^2$, which is $P_C\sum_{i\in\Phi_p} x_{0,i}^{-\alpha^f}$ \cite{martin_book}. It is important to mention that in \eqref{sir_second}, we only take the average with respect to the forward channel and not the $\Phi_p$. The coverage probability still depends on the $\Phi_p, \Phi_b$ and $\textbf{h}_{j,k}^b$. Similar approaches have been used in \cite{joint_harpeet,itlinq,analytical_D2D,jeff_tit,shotgun_cellular}  to reduce the complexity of the problem and to get tractable results. 

Before presenting the main result of this section, we first find the average power at a BN in the following lemma.
\begin{lem}
The average power at a BN when it harvests from all PBs in the network is given by
\begin{equation}
\bar{P}_{{BN}_0} = \frac{P_C \pi \lambda_p   \alpha^f}{\alpha^f-2}.
\label{avg_power2}
\end{equation}
\label{avg_power_2}
\end{lem}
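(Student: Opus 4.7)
The plan is to compute the expectation of $P_{{BN}_0}$ in \eqref{px_pb_2} in two stages: first average out the Rayleigh fading coefficients, then apply Campbell's theorem to the remaining sum over the PPP $\Phi_p$.

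First I would expand the squared magnitude as a double sum $\sum_{i\in\Phi_p}\sum_{j\in\Phi_p} P_C\, h_{0,i}^f \overline{h_{0,j}^f}\, x_{0,i}^{-\alpha^f/2} x_{0,j}^{-\alpha^f/2}$ and take the expectation over the forward channels, which are i.i.d.\ $\mathrm{CN}(0,1)$ and independent of $\Phi_p$. The cross terms ($i\neq j$) vanish because $\mathbb{E}[h_{0,i}^f \overline{h_{0,j}^f}]=0$, and the diagonal terms give $\mathbb{E}[|h_{0,i}^f|^2]=1$, reducing the expression to
\begin{equation*}
\bar{P}_{{BN}_0} = P_C\,\mathbb{E}\!\left[\sum_{i\in\Phi_p} x_{0,i}^{-\alpha^f}\right].
\end{equation*}
As in \eqref{sinr_first}, the bounded path loss model must be applied, so the summand is really $\min\{1,x_{0,i}^{-\alpha^f}\}$; otherwise the integral diverges at the origin.

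Next, I would apply Campbell's theorem for the homogeneous PPP $\Phi_p$ of density $\lambda_p$, which turns the sum into a deterministic integral over $\mathbb{R}^2$. Switching to polar coordinates centered at $\text{BN}_0$, this becomes
\begin{equation*}
\bar{P}_{{BN}_0} = P_C\,\lambda_p \int_{\mathbb{R}^2} \min\{1,\|x\|^{-\alpha^f}\}\,dx = 2\pi P_C \lambda_p \left[\int_0^1 r\,dr + \int_1^\infty r^{1-\alpha^f}\,dr\right].
\end{equation*}
The first integral equals $1/2$; the second, which converges provided $\alpha^f>2$, equals $1/(\alpha^f-2)$. Combining the two yields $2\pi\lambda_p[\tfrac{1}{2}+\tfrac{1}{\alpha^f-2}] = \pi\lambda_p\alpha^f/(\alpha^f-2)$, which after multiplying by $P_C$ gives the claimed expression.

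There is no real obstacle here: the only subtle points are remembering to invoke the bounded path loss model to avoid the singularity at the origin, and noting that the convergence of the tail integral requires $\alpha^f>2$, which is the standard assumption in stochastic geometry analyses and is implicit throughout the paper. The result can also be contrasted with Lemma \ref{avg_power_1}: there the averaging was only over the nearest $N_p$ PBs and the expression stayed as a multi-dimensional integral, whereas here the PPP averaging through Campbell's theorem collapses everything into a closed form.
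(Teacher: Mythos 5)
Your proposal is correct and follows essentially the same route as the paper's proof: average over the i.i.d.\ $\mathrm{CN}(0,1)$ fading so the cross terms vanish, then apply Campbell's theorem with the bounded path loss model $\min\{1,x^{-\alpha^f}\}$ to obtain the closed form. You simply spell out the Campbell integral explicitly, which the paper leaves implicit, and your computation confirms the stated expression.
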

It is important to mention that a power $\left(1-\beta\right)\bar{P}_{{BN}_0}$  is harvested while the remaining power $\beta \bar{P}_{{BN}_0}$ is reflected through backscatter modulation. Similar to Lemma \ref{avg_power_1}, we do not use the dimensionality reduction in the proof of Lemma  \ref{avg_power_2} and the expression in \eqref{avg_power2} is exact. 
\begin{proof}
The average harvested power $\bar{P}_{BN}$ in this setup can be written as
\begin{equation}
\bar{P}_{{BN}_0} = \mathbb{E}\left[P_{{BN}_0}\right] = \mathbb{E}_{h_{0,i}^f, \Phi_p}\left[\left|\sum_{i\in\Phi_p}\sqrt{P_C} h_{0,i}^f x_{0,i}^{-\alpha^f/2}\right|^2\right]\overset{a}=\mathbb{E}_{\Phi_p}\left[P_C\sum_{i\in\Phi_p} x_{0,i}^{-\alpha^f}\right],
\end{equation}
where $a$ follows because $h_{0,i}^f$ is i.i.d distributed $\mathrm{CN}\left(0,1\right)$ and the final result follows from Campbell's theorem \cite{martin_book}.
\end{proof}
We know that the SIR coverage probability depends on both the $\Phi_p$, $\Phi_b$, and its exact expression cannot be obtained. Therefore, we utilize Jensen's inequality and find the approximated expression for the coverage probability in the following theorem.
\begin{thm}
The SIR coverage probability when the BNs harvest energy from all the PBs is given by
\begin{equation}
\mathbb{P}_s \approx \exp\left(-\frac{2 \pi^2 \lambda_b \Theta_m^{2/\alpha^b} d_{0,0}^2}{\alpha^b \sin\left(\frac{2 \pi}{\alpha^b}\right)}\right).
\label{eq:thm2}
\end{equation}
\label{thm2}
\end{thm}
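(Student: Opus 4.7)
The plan is to follow the same skeleton as in the proof of Theorem~\ref{theorem1}, but to deal with the fact that both the numerator aggregate $S_0\triangleq\sum_{i\in\Phi_p} x_{0,i}^{-\alpha^f}$ and each interferer's aggregate $S_j\triangleq\sum_{k\in\Phi_p} x_{j,k}^{-\alpha^f}$ are now functionals of the \emph{entire} PPP $\Phi_p$ rather than of a finite nearest-neighbour set. The first step is to condition on $\textbf{h}_{0,0}^b\sim\mathrm{Exp}(1)$ in the SIR expression \eqref{sir_second} and write
\begin{equation*}
\mathbb{P}_s \;=\; \mathbb{E}\!\left[\exp\!\left(-\,\frac{\Theta_m d_{0,0}^{\alpha^b}}{S_0}\,I\right)\right], \qquad I \;=\; \sum_{j\in\Phi_b\setminus BN_0}\textbf{h}_{0,j}^b\, d_{0,j}^{-\alpha^b}\,S_j .
\end{equation*}
I would then, conditionally on $\Phi_p$, take the expectation with respect to the i.i.d.\ backward fades $\textbf{h}_{0,j}^b$, replacing each factor by its Laplace transform $(1+s d_{0,j}^{-\alpha^b}S_j)^{-1}$ with $s=\Theta_m d_{0,0}^{\alpha^b}/S_0$, and apply the PGFL of the 2-D PPP $\Phi_b$ (as in Theorem~\ref{theorem1}) to obtain
\begin{equation*}
\mathbb{P}_s \;=\; \mathbb{E}_{\Phi_p}\!\left[\exp\!\left(-\lambda_b\int_{\mathbb{R}^2}\frac{s\,|y|^{-\alpha^b}\,S(y)}{1+s\,|y|^{-\alpha^b}\,S(y)}\,\mathrm{d}y\right)\right],
\end{equation*}
where $S(y)=\sum_{k\in\Phi_p}|y-k|^{-\alpha^f}$ is the PB-aggregate seen from position $y$, and the ratio $s\,S(y)=\Theta_m d_{0,0}^{\alpha^b}\,S(y)/S_0$ still depends on $\Phi_p$ both through the normalisation $S_0$ and through $S(y)$.

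The key step, and the main technical obstacle, is the Jensen-type approximation that collapses the remaining dependence on $\Phi_p$. By stationarity of the PPP $\Phi_p$, the shot-noise field $S(y)$ is translation-invariant in distribution, so $\mathbb{E}[S(y)]=\mathbb{E}[S_0]$ for every $y$. Motivated by this, I would approximate the ratio $S(y)/S_0$ by its mean value $1$ inside the integrand, i.e.\ replace $s\,|y|^{-\alpha^b}S(y)$ by $\Theta_m d_{0,0}^{\alpha^b}|y|^{-\alpha^b}$, and then use Jensen's inequality to pull the expectation over $\Phi_p$ inside the exponential. The combined effect is that the PB-dependent factors in numerator and denominator of the SIR cancel in the approximation, yielding
\begin{equation*}
\mathbb{P}_s \;\approx\; \exp\!\left(-\lambda_b\int_{\mathbb{R}^2}\frac{\Theta_m d_{0,0}^{\alpha^b}\,|y|^{-\alpha^b}}{1+\Theta_m d_{0,0}^{\alpha^b}\,|y|^{-\alpha^b}}\,\mathrm{d}y\right).
\end{equation*}
This is justified along the same lines as the dimensionality-reduction arguments invoked in \cite{joint_harpeet,itlinq,analytical_D2D,jeff_tit,shotgun_cellular} and already used earlier in this section to pass from \eqref{sinr_second} to \eqref{sir_second}. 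The fidelity of this single approximation is what ultimately has to be checked numerically in Section~IV.

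The last step is a routine integral evaluation. I would switch to polar coordinates, $\mathrm{d}y=2\pi r\,\mathrm{d}r$, set $A=\Theta_m d_{0,0}^{\alpha^b}$, and substitute $u=(r/A^{1/\alpha^b})^2$ to obtain
\begin{equation*}
\lambda_b\int_{\mathbb{R}^2}\frac{A|y|^{-\alpha^b}}{1+A|y|^{-\alpha^b}}\,\mathrm{d}y \;=\; \pi\lambda_b A^{2/\alpha^b}\int_0^{\infty}\frac{\mathrm{d}u}{1+u^{\alpha^b/2}} \;=\; \frac{2\pi^2\lambda_b\,\Theta_m^{2/\alpha^b}d_{0,0}^{2}}{\alpha^b\sin(2\pi/\alpha^b)},
\end{equation*}
where I have used the standard identity $\int_0^{\infty}(1+u^n)^{-1}\mathrm{d}u=(\pi/n)/\sin(\pi/n)$ for $n=\alpha^b/2>1$. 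Substituting this back yields \eqref{eq:thm2}. The whole argument is parallel to, and in fact simpler than, the proof of Theorem~\ref{theorem1}; the only genuinely new ingredient is the Jensen-based cancellation of the $\Phi_p$-aggregates, which is why the result is stated as an approximation rather than an equality.
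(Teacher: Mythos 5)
Your proposal is correct and follows essentially the same route as the paper's own proof: expectation over $\textbf{h}_{0,0}^b$, factorization over the i.i.d.\ backward fades, the PGFL of $\Phi_b$ conditioned on $\Phi_p$, and then a Jensen-type step that replaces the ratio of the two $\Phi_p$ shot-noise aggregates by its mean value $1$ (the paper justifies $\mathbb{E}_{\Phi_p}[S_j]/\mathbb{E}_{\Phi_p}[S_0]=1$ via Campbell's theorem in Lemma~\ref{avg_power_2}, you via stationarity --- equivalent), before evaluating the same standard integral. No gaps; this matches the paper's argument.
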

\begin{proof}
See Appendix B.
\end{proof}
The above expression is the same as of Corollary \ref{cor_5}, which makes sense because \eqref{cor5} is the SIR coverage probability when the BNs use the average transmit power and we know that SIR coverage is independent of the transmit power of the BNs.
\section{Simulation Results and Discussion}
\begin{figure*} 
\centering
			\begin{subfigure}{0.5\textwidth}
			\centering
			\includegraphics[width=1\linewidth]{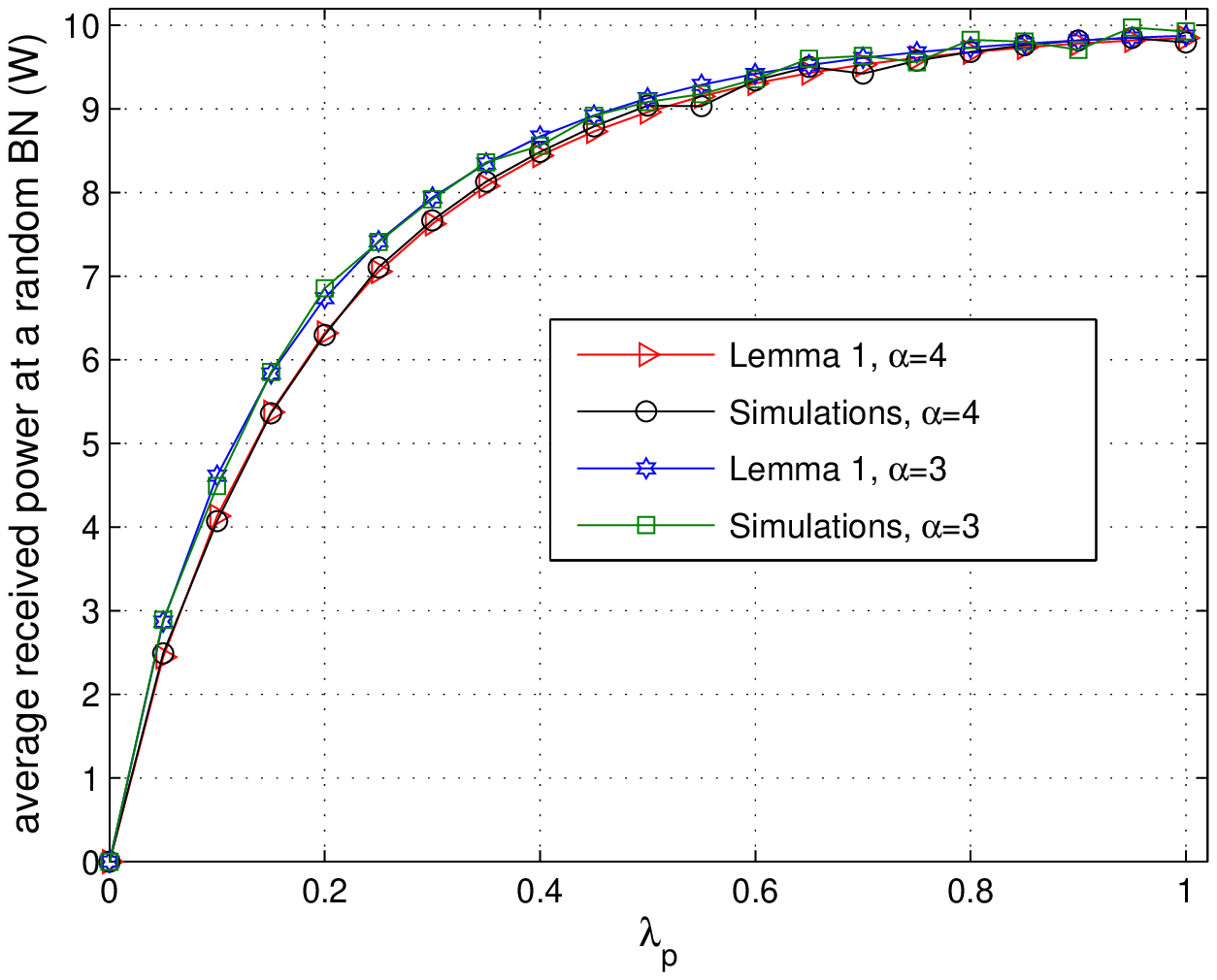}
			\caption{Lemma 1 (harvesting energy from the nearest PB)}
			\label{sub:fig_lemma1}
		\end{subfigure}
	\begin{subfigure}{0.5\textwidth}
		\centering
		\includegraphics[width=1\linewidth]{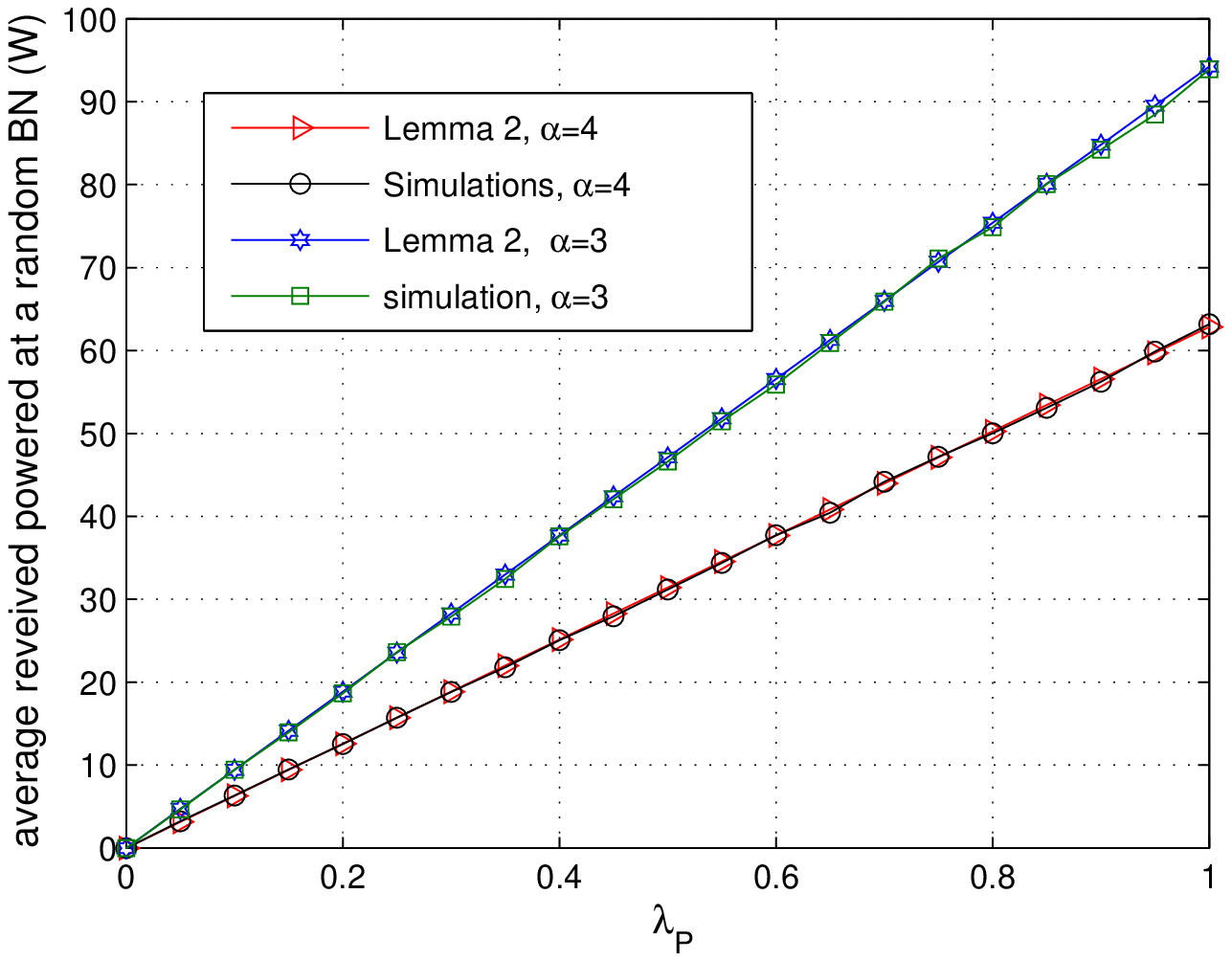}
		 \caption{Lemma 2 (harvesting energy from the whole network)}
		 \label{sub:fig_lemma2}
	\end{subfigure}\hfill
\caption{Simulations vs analysis (Lemma 1 and Lemma 2)}
		\label{fig:sim_vs_anal_lemma1_and_2}
\end{figure*}
In this section, we present numerical results to validate the accuracy of the analysis and the tradeoffs that emerge in this wireless powered backscatter network. The system model of Section II are carefully reproduced in the Monte Carlo simulations. We fix $P_C=40$dBm, $\beta=0.5$, $d_{0,0}=1$, $N=-40$dB, and $\alpha^f=\alpha^b=\alpha$. Both $\lambda_p$ and $\lambda_b$ are in per square meters $\left(/m^2\right)$ and we consider a unit area of $100m^2$ when  computing the capacity $\mathbb{C}$ of the network. If we do not mention $\alpha$ and $N_p$ then we keep $\alpha=4$ and $N_p=1$, otherwise, we state it. 

In Fig. \ref{fig:sim_vs_anal_lemma1_and_2}, we plot the average received power at a BN $\bar{P}_{BN}$ against the density of the PBs $\lambda_p$. It can be seen that the numerical results obtained through Lemma 1 and 2 exactly match with the simulations results. It can be observed that $\bar{P}_{BN}$ is greater for small path loss exponent i.e., $\alpha=3$. Furthermore, it can be noticed that $\bar{P}_{BN}$ increases with an increase in $\lambda_p$ for both Fig. \ref{sub:fig_lemma1} and Fig. \ref{sub:fig_lemma2}. However, for Lemma 1 in Fig. \ref{sub:fig_lemma1}, it saturates to $10$watts ($40$dBm) for $\lambda_p=1$, which is the transmit power $P_C$ of a PB. This implies that when $\lambda_p=1$, $N_p=1$ and the path loss model is $\min\left\{1, x^{-\alpha}\right\}$,  the received power is on average equal to the transmit power of a PB. Whereas, in Lemma 2 the BNs harvest power from all the PBs, therefore, in Fig. \ref{sub:fig_lemma2} the $\bar{P}_{BN}$ does not saturate and keep increasing with $\lambda_p$.
\begin{figure}
	\centering
		\includegraphics[scale=0.6]{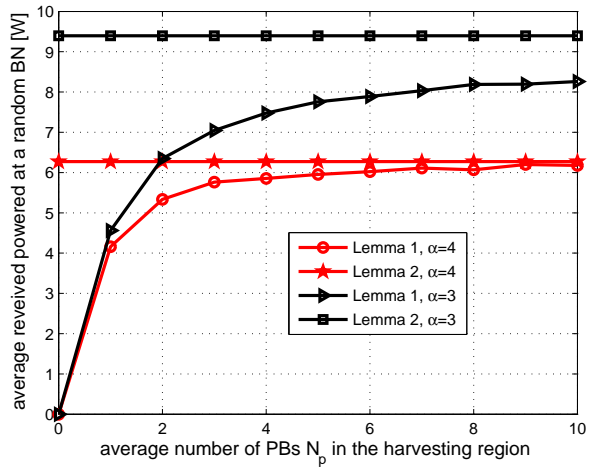}
		\caption{Impact of the number of PBs $N_P$ in the harvesting zone on the $\bar{P}_{BN}$, ($\lambda_p=0.1$)}
		\label{fig:lemma1_2}
\end{figure}
\begin{figure*} 
\centering
			\begin{subfigure}{0.5\textwidth}
			\centering
			\includegraphics[width=1\linewidth]{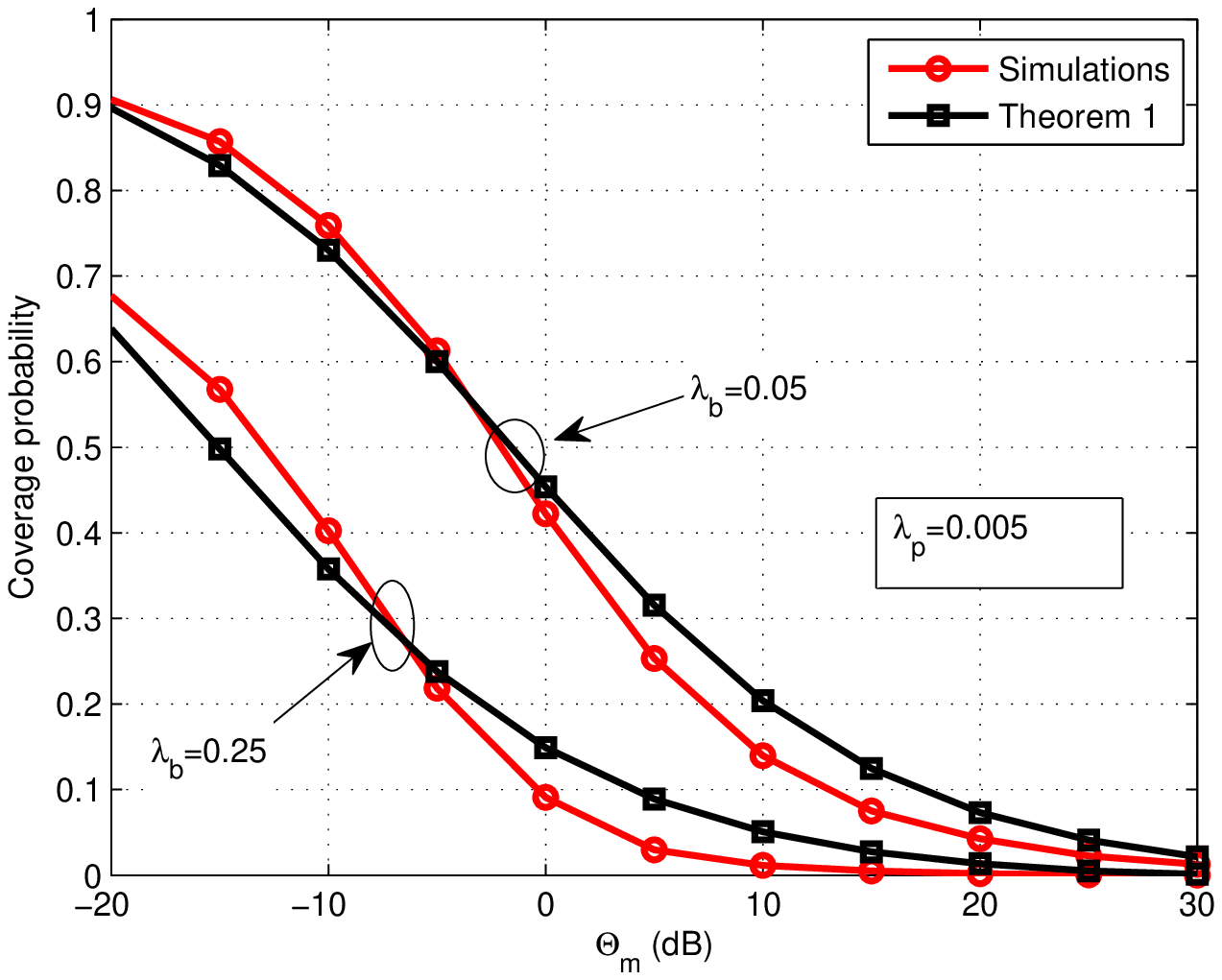}
			\caption{Theorem 1}
			\label{sub:fig_thm1}
		\end{subfigure}
	\begin{subfigure}{0.5\textwidth}
		\centering
		\includegraphics[width=1\linewidth]{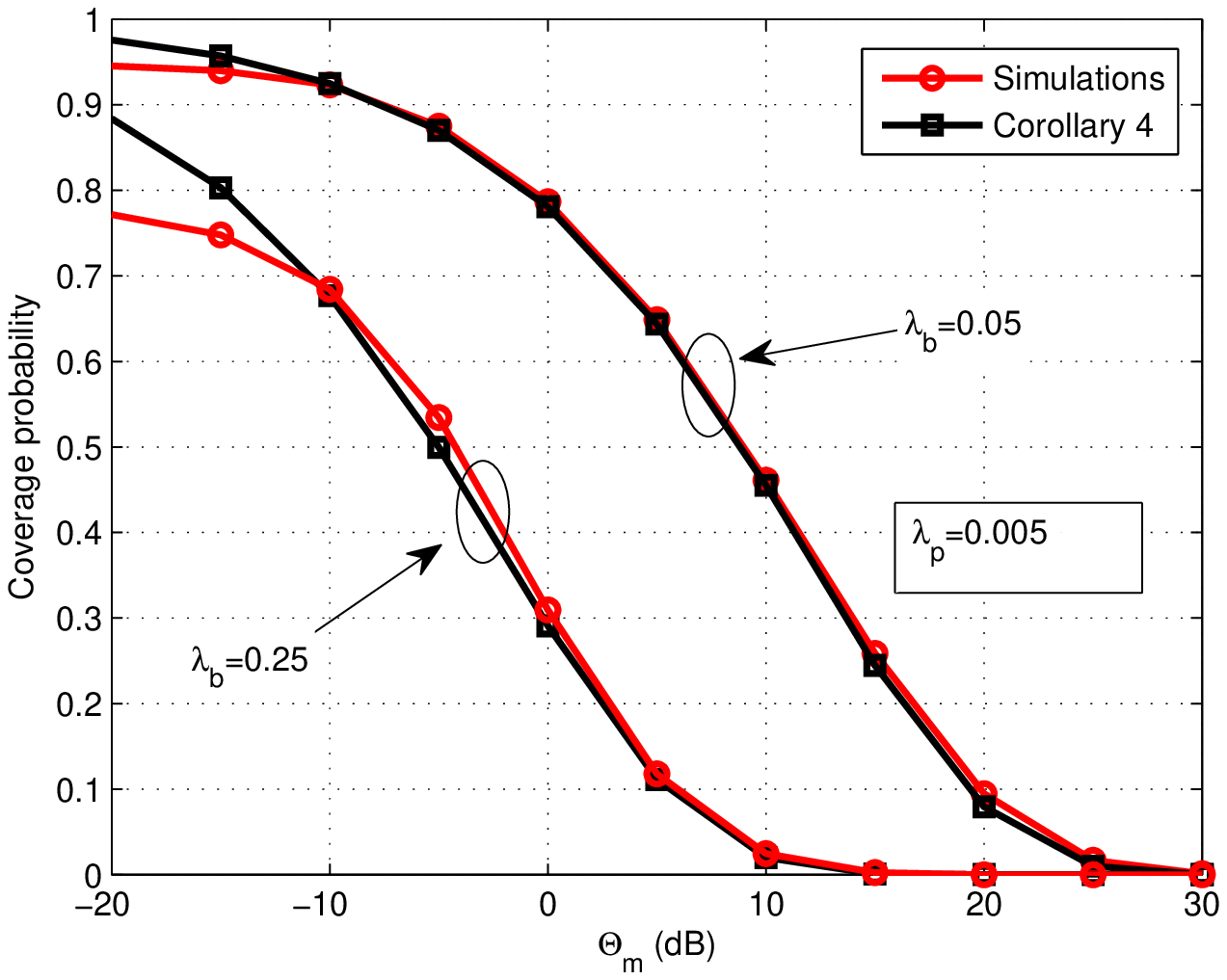}
		 \caption{Corollary 4}
		 \label{sub:fig_cor4}
	\end{subfigure}\hfill
\caption{Simulations versus analysis (Theorem 1 and Corollary 4)}
		\label{fig:sim_vs_anal_thm1_and_cor4}
\end{figure*}

Fig. \ref{fig:lemma1_2} illustrates the impact of the number of PBs  $N_p$ in the harvesting zone on the average received power  $\bar{P}_{BN}$. It can be noticed that, for Lemma 1, the average received power $\bar{P}_{BN}$ increases with an increase in $N_P$,   whereas, for Lemma 2, the $\bar{P}_{BN}$ is constant because the harvesting zone is the whole network in Lemma 2. The important observation  is that most of the power is harvested from the nearest PBs especially for $\alpha=4$. In case of $\alpha=4$, the contribution from the far away PBs is very small, therefore, $\bar{P}_{BN}$ quickly converges to the maximum value. On the other hand,   for $\alpha=3$, the contribution from those far away nodes is still significant and therefore convergence is not achieved.  
\begin{figure}
	\centering
		\includegraphics[scale=0.7]{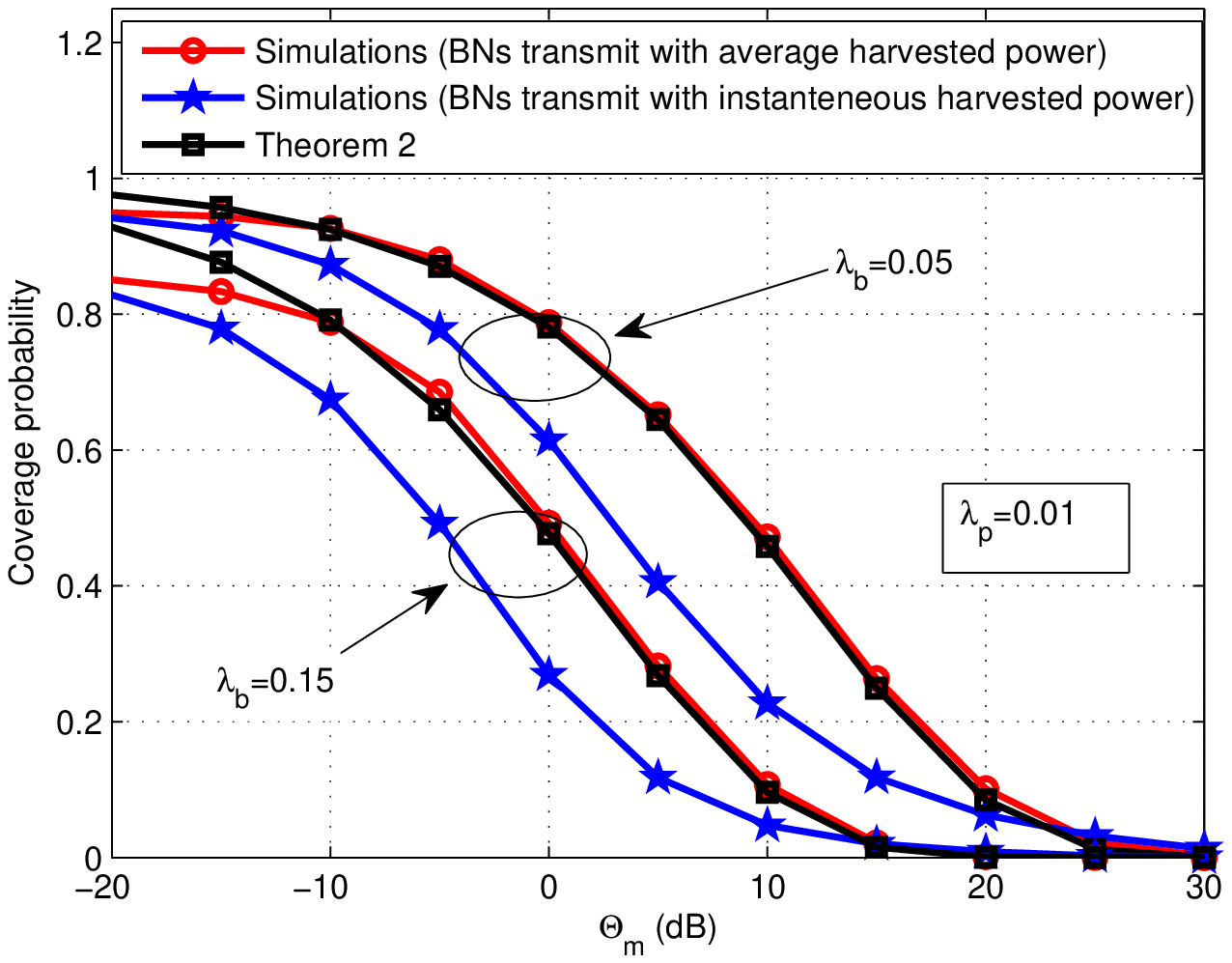}
		\caption{Simulations versus analysis (Theorem 2)}
		\label{fig_thm2}
\end{figure}

In Fig. \ref{fig:sim_vs_anal_thm1_and_cor4}, we compare the Monte Carlo simulations results with the numerical results for various network parameters. It can be noticed that the numerical and simulations curves are matching well for both  Theorem 1 and Corollary 4, which shows the accuracy of the analysis. It is important to remind that, in the proof of Theorem 1, we used a simplified expression for the SINR in order to make the analysis tractable. Similarly, in Corollary 4, we used the averaged received power $\bar{P}_{BN}$ as the harvested power at BN. 

Similar to Fig. \ref{fig:sim_vs_anal_thm1_and_cor4}, we compare the numerical and simulation results in Fig. \ref{fig_thm2}  for Theorem 2. It can be noticed that there are two different curves for the simulations. The first one is obtained when the BNs utilize the average harvested power (given in Lemma \ref{avg_power_2}), while the second is obtained when the BNs utilize the instantaneous harvested power. It can be noticed that the numerical curve is very close to the simulations' curve, which is obtained when the BNs transmit with the average harvested power. Whereas,  the second simulations curve provides the lower bound.  This illustrates that the random transmit power of the BNs degrades the coverage probability of the network. 

Fig. \ref{impact_den_PBs} shows the impact of the density of the PBs on the coverage probability for different SINR thresholds. It can be noticed from Fig. \ref{sub_fig_low_threshold} that the coverage probability increases with an increase in density of the PBs for different values of $\lambda_b$.
This improvement in the coverage probability is due to the increase in the received (and reflected) power of at the BNs as a result of the dense deployment of the PBs. 
However, for a large $\Theta_m$, the dense deployment of PBs can reduce the SINR coverage probability, as shown in Fig. \ref{sub_fig_high_threshold}. From Fig. \ref{sub_fig_high_threshold}, we observe that the coverage probability reaches to a maximum point and then starts decreasing with the increase in $\lambda_p$. 
This decrease in the coverage probability at higher SINR threshold is due to the increase in the interference power. The high SINR is achieved only by a very small fraction of the BNs, which are very close to the PBs. The distance between the PBs and BNs decreases with an increase in $\lambda_p$, which improves the harvested and reflected power of the BNs.
As a result the interference power also increases and the coverage probability drops.
This confirms our earlier observation in Remark \ref{remark2} and suggests that in order to achieve the optimal coverage, the density of the PBs should be adjusted by taking  $\Theta_m$ into account.
  
\begin{figure*} 
\centering
			\begin{subfigure}{.5\textwidth}
			\centering
			\includegraphics[width=1\linewidth]{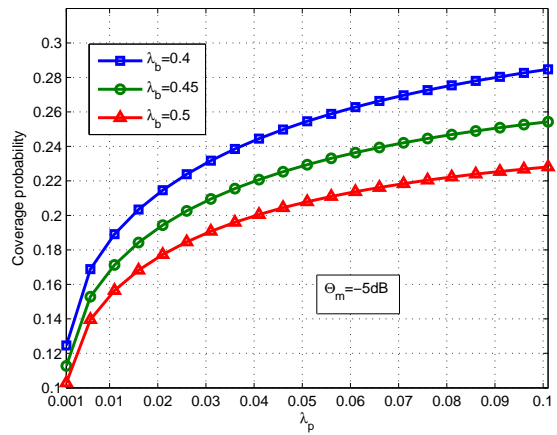}
			\caption{$\Theta_m=-5$dB}
			\label{sub_fig_low_threshold}
		\end{subfigure}\hfill \\
	\begin{subfigure}{.5\textwidth}
		\centering
		\includegraphics[width=1\linewidth]{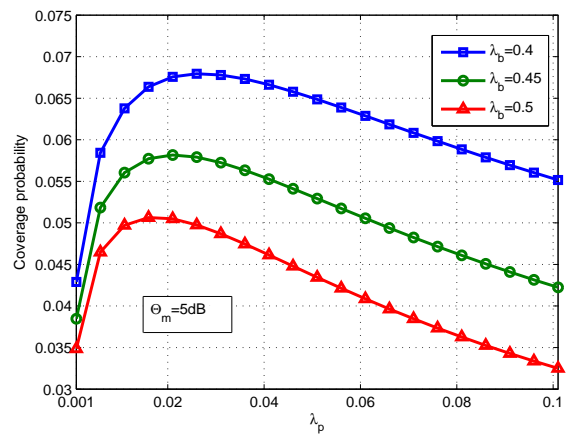}
		 \caption{$\Theta_m=5$dB}
		 \label{sub_fig_high_threshold}
	\end{subfigure}\hfill
\caption{Impact of the density of the PBs $\lambda_p$ on the coverage probability $\mathbb{P}_s$}
		\label{impact_den_PBs}
\end{figure*}

\begin{figure*}
		\begin{subfigure}{0.5\textwidth}
			\centering
			\includegraphics[width=1\linewidth]{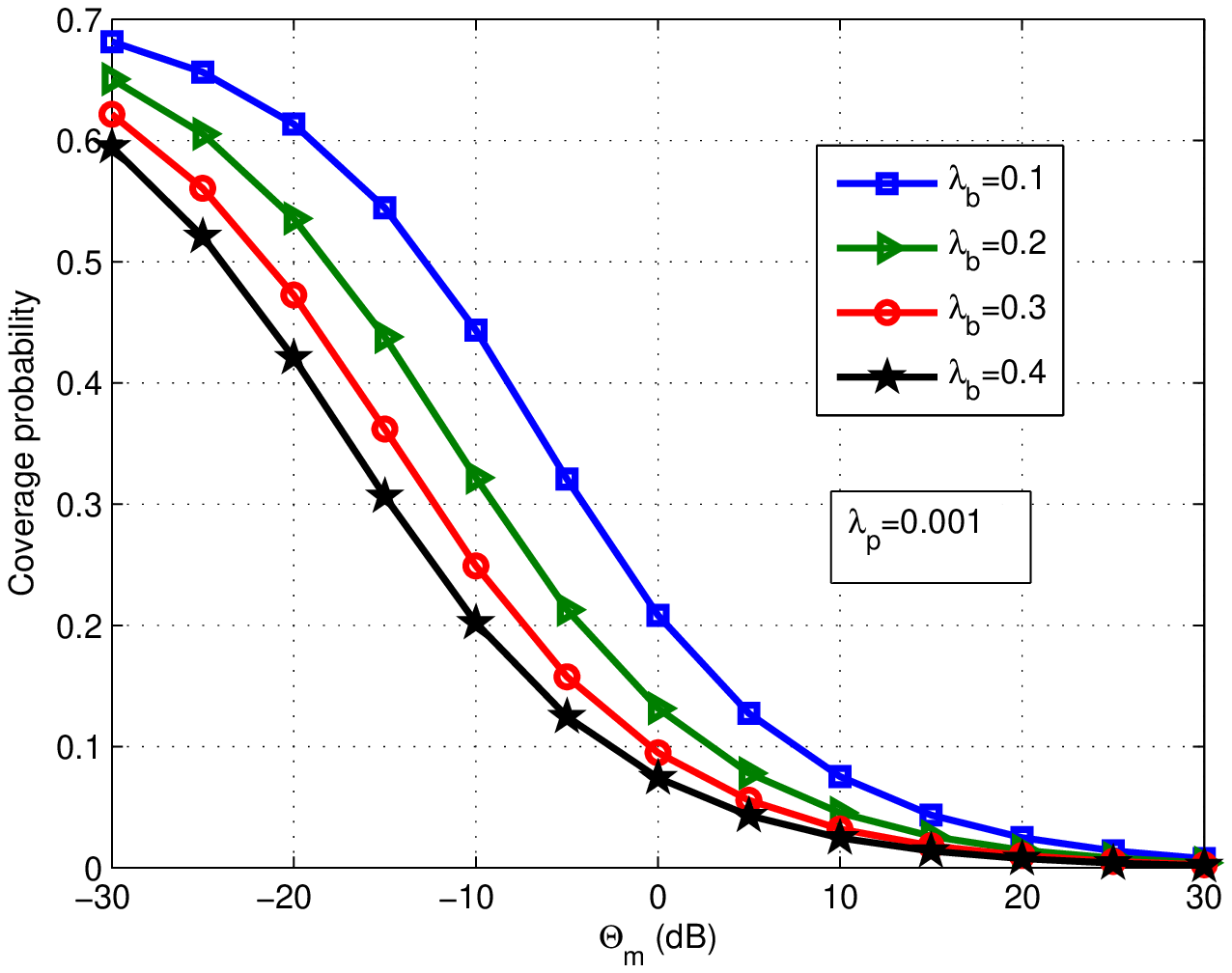}
			\caption{Coverage probability $\mathbb{P}_s$}
			\label{sub:fig_cov1}
		\end{subfigure}\hfill 
	\begin{subfigure}{0.5\textwidth}
	  \centering
		\includegraphics[width=1\linewidth]{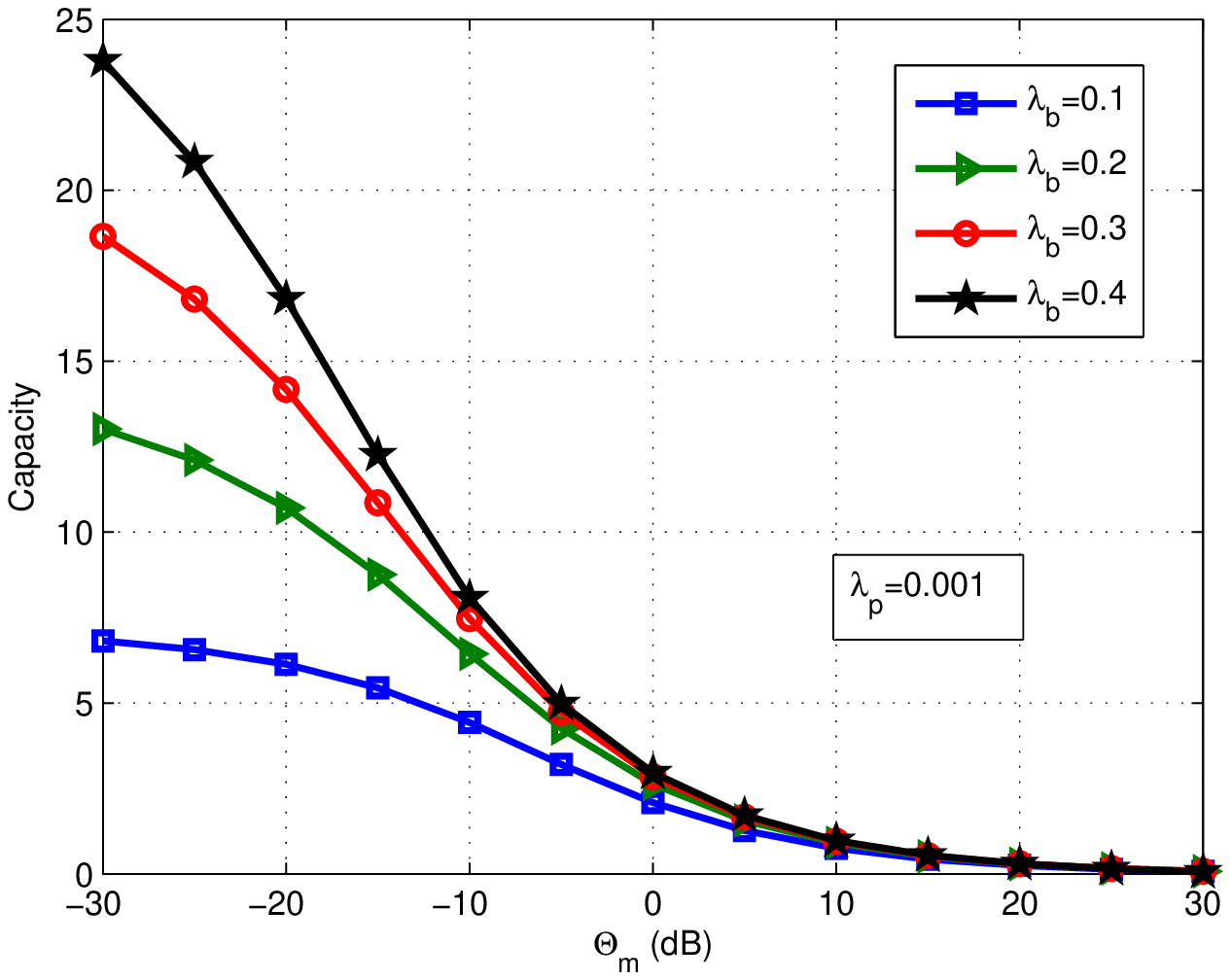}
		 \caption{Capacity $\mathbb{C}$}
		 \label{sub:fig_cap1}
	\end{subfigure}\hfill
	\begin{subfigure}{0.5\textwidth}
			\centering
			\includegraphics[width=1\linewidth]{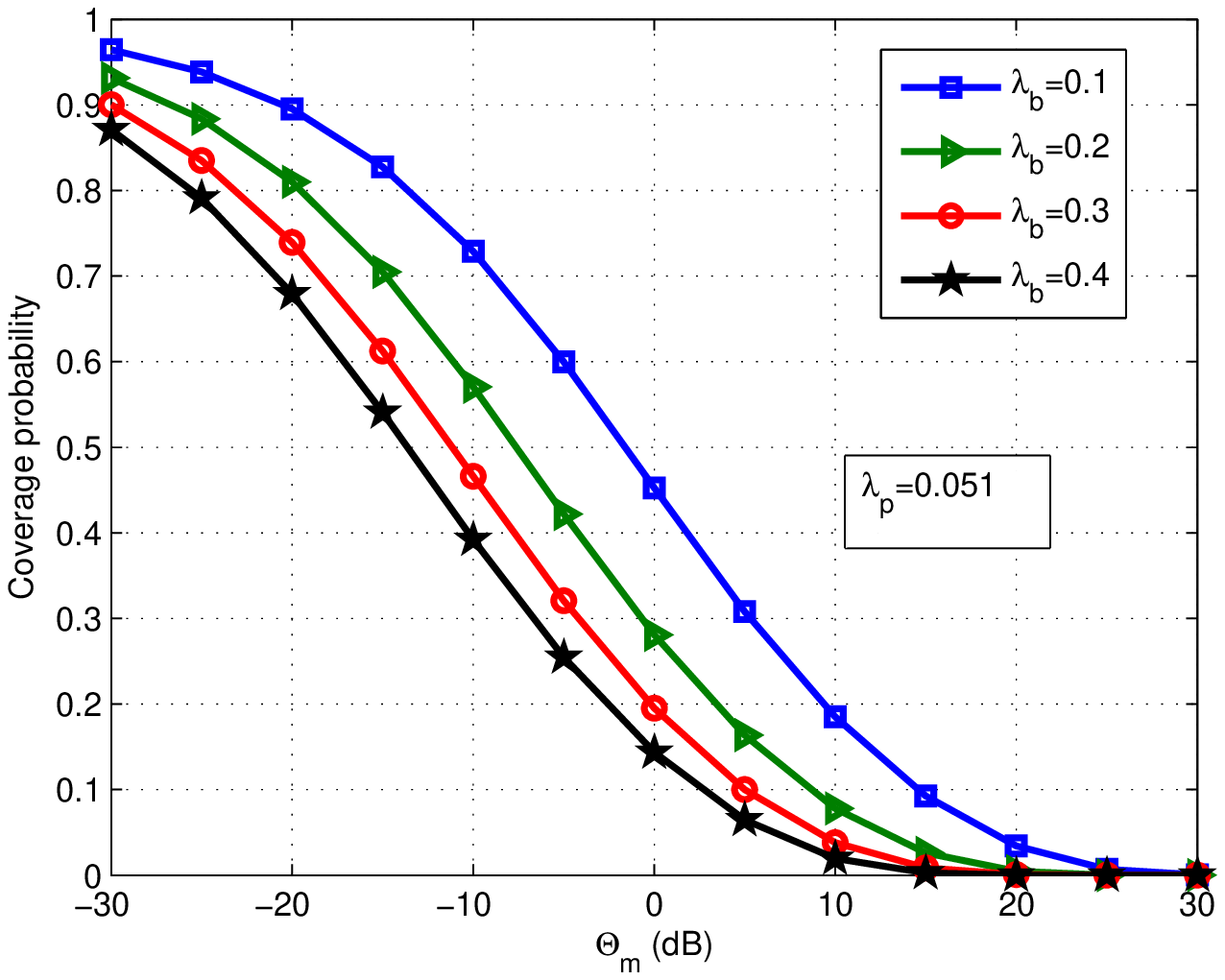}
			\caption{Coverage probability $\mathbb{P}_s$}
			\label{sub:fig_cov2}
		\end{subfigure} \hfill
	\begin{subfigure}{0.5\textwidth}
	  \centering
		 \includegraphics[width=1\linewidth]{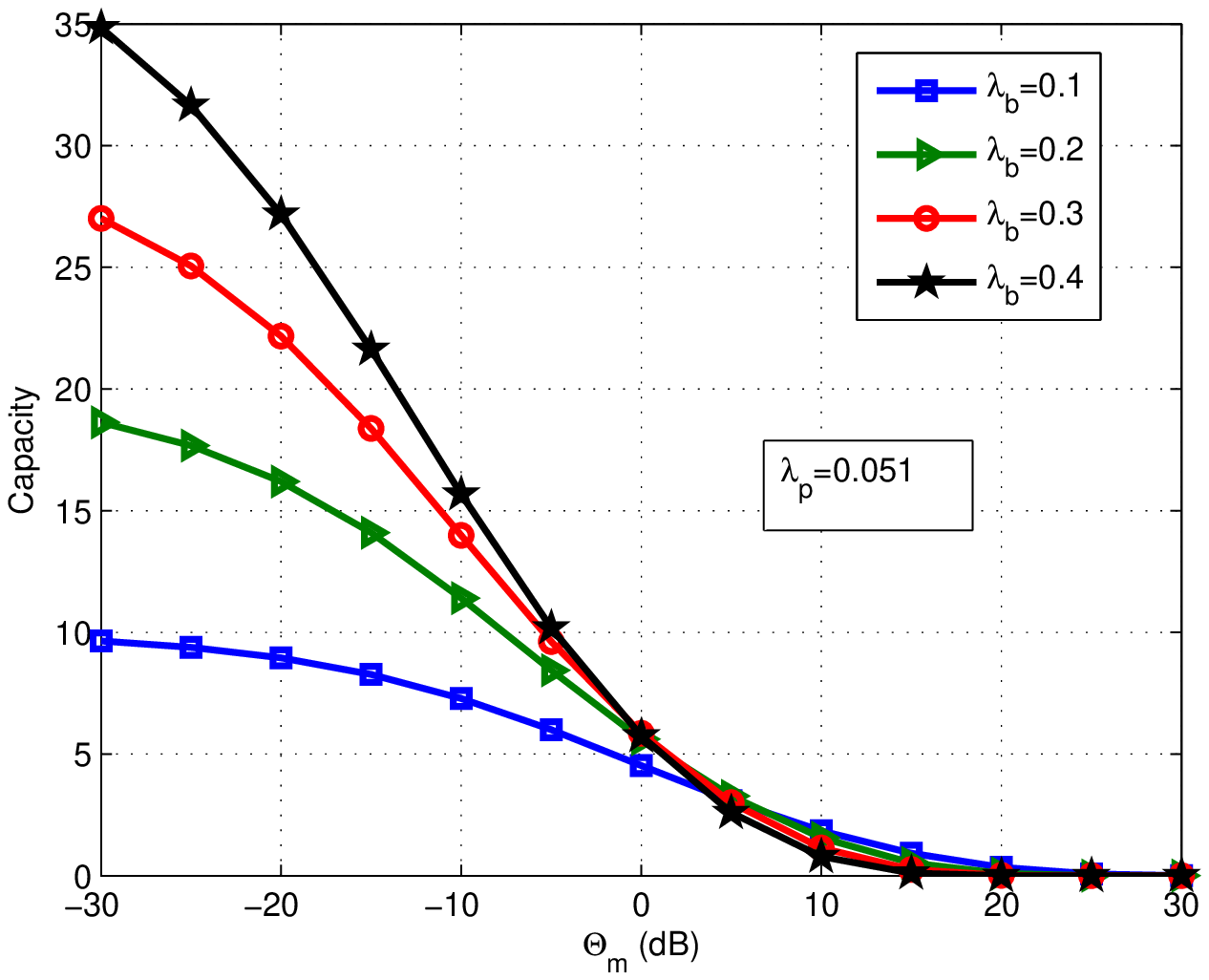}
		 \caption{Capacity $\mathbb{C}$}
		 \label{sub:fig_cap2}
	\end{subfigure}\hfill
	\caption{Impact of density of the BNs $\lambda_b$ on the coverage probability $\mathbb{P}_s$ and capacity $\mathbb{C}$ of the network}
\label{impact_den_BN}
\vspace{-1.7em}
\end{figure*}
Fig. \ref{impact_den_BN} depicts the impact of the density of the BNs on the coverage probability and capacity of the network. It can be observed from both Fig. \ref{sub:fig_cov1} and Fig. \ref{sub:fig_cov2} that an increase in the density of the BNs decreases the coverage probability, which is due to the increase in the number of interferers. However, a more appropriate metric in this scenario might be the capacity of network, which is the number of successful transmissions in some unit area. Therefore, we plot the capacity corresponding to Fig. \ref{sub:fig_cov1} and Fig. \ref{sub:fig_cov2} in Fig. \ref{sub:fig_cap1} and Fig. \ref{sub:fig_cap2}, respectively. In contrast to the coverage probability, both Fig. \ref{sub:fig_cap1} and Fig. \ref{sub:fig_cap2} show that the capacity of the network increases with increase in the density of the BNs. 
However, at higher SINR threshold in Fig. \ref{sub:fig_cap2}  the capacity decreases with an increase in the density of the BNs. The high SINR threshold is achieved by the BNs which are in close vicinity of the PBs.  The number of interfering BNs and the interference increase with an increase in $\lambda_b$ due to which the BNs can not achieve such a high $\Theta_m$ and therefore, the capacity drops. The insights from Fig. \ref{impact_den_BN} again validate our earlier observations in Remark \ref{remark3} and \ref{remark4}.    
 The coverage probability can be seen as an objective function of an individual user whereas the capacity is a network wide metric and the service provider would be more interested in it.  The network provider would like to maximize the capacity of the network given that certain minimum quality of service (coverage probability here) should be satisfied. Therefore, in order to get an optimal performance the tradeoff between the coverage and capacity should be considered along with $\Theta_m$.

\begin{figure*} 
\centering
			\begin{subfigure}{.5\textwidth}
			\centering
			\includegraphics[width=1\linewidth]{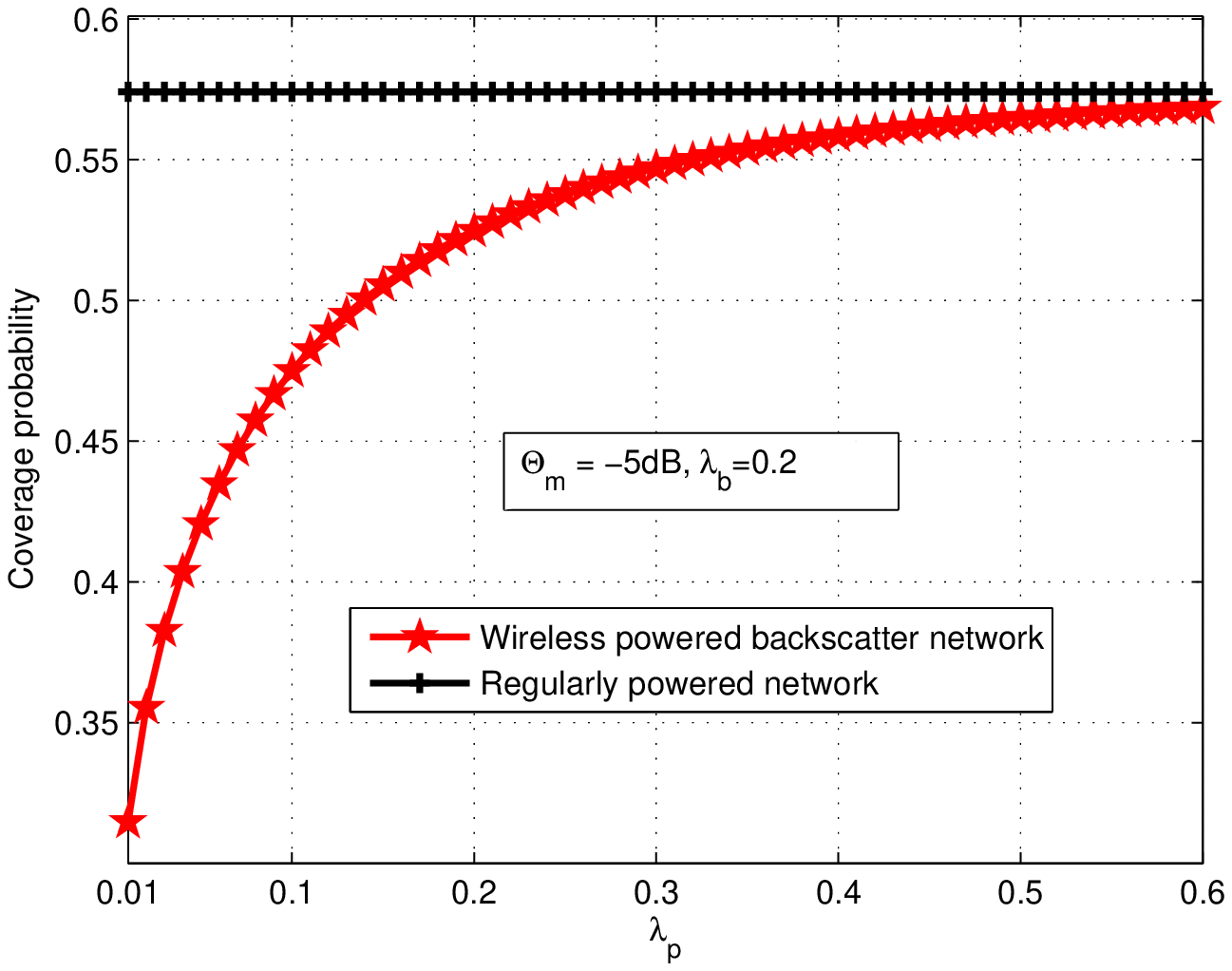}
			\caption{}
			\label{sub_low_theta}
		\end{subfigure}\hfill \\
	\begin{subfigure}{.5\textwidth}
		\centering
		\includegraphics[width=1\linewidth]{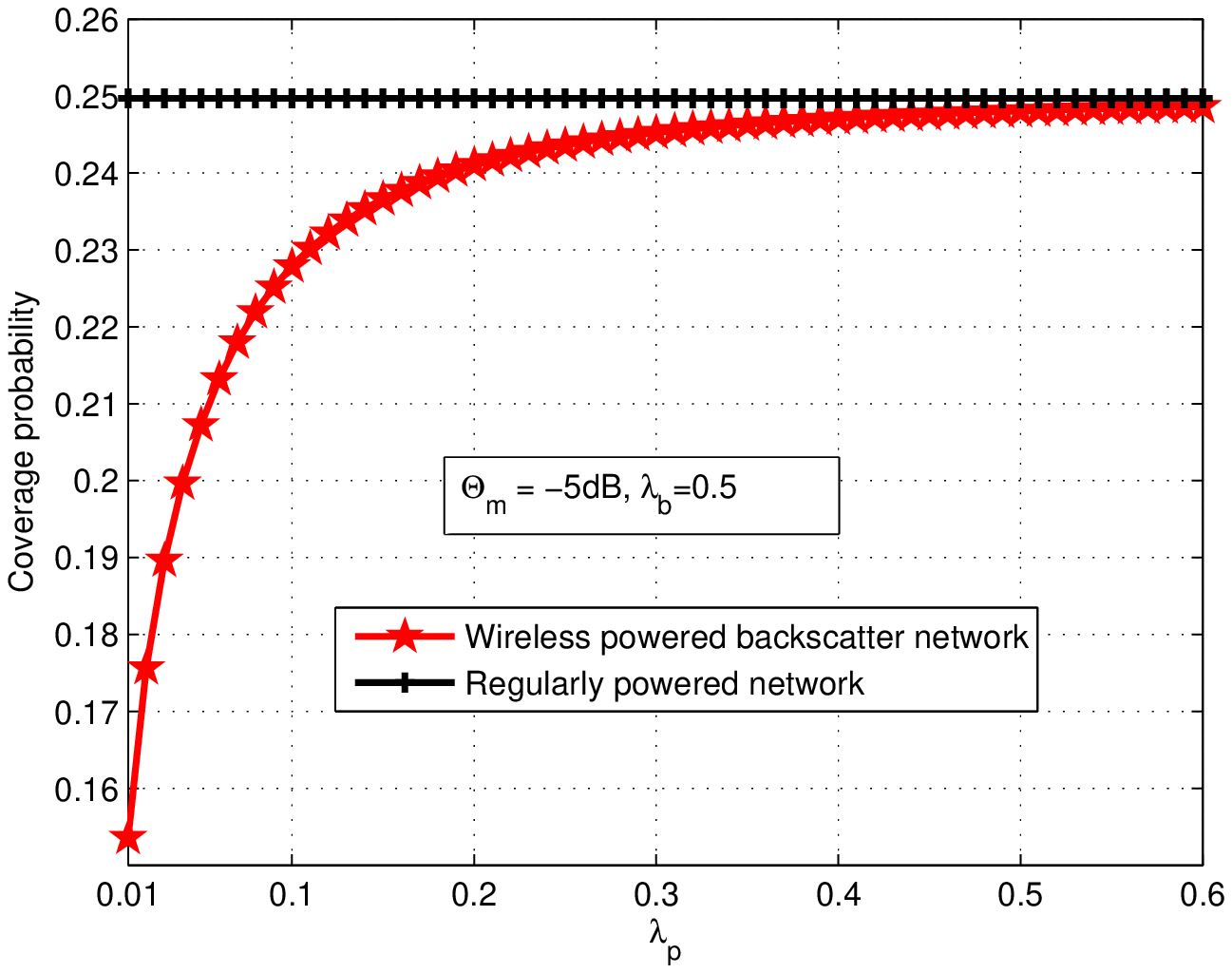}
		 \caption{}
		 \label{sub_high_theta}
	\end{subfigure}\hfill
\caption{Regularly powered vs. wireless powered backscatter network}
		\label{regular_vs_bcn}
\end{figure*}

In Fig. \ref{regular_vs_bcn} we compare the coverage of regularly powered network (RPN) with the wireless powered backscatter communications network (WPBN). In RPN there is no PB tier and each BN have constant transmit power equal to the transmit power of a PB i.e., $P_C$. Since, the transmit power of each BN is $P_C$, its coverage probability is the same for different values of $\lambda_p$ as shown in the figure. Whereas for the WPBN, the coverage probability increases with an increase in the density of the PBs. It can be noticed that when $\lambda_p=0.6$ then the coverage probability of the WPBN approaches to the coverage probability of the RPN. 
Another key point that can be observed from both Fig. \ref{sub_low_theta} and \ref{sub_high_theta} is that the coverage of WPBN increases very rapidly with increase in $\lambda_p$ initially, while after a certain increase in $\lambda_p$ the gain in coverage is very small. For instance, in both  Fig. \ref{sub_low_theta} and \ref{sub_high_theta} the increase in the coverage probability is very significant when  $\lambda_p$ goes from $0.01$ to $0.3$, whereas from $0.3$ to $0.6$ the improvement in the coverage is almost negligible. This suggests that after certain point further densification of the WPBN with more PBs would not improve the coverage significantly.
\section{Conclusion}
%
In this paper, a general framework to study the performance of wireless powered backscattered IoT network is presented. To make the analysis tractable, a dimensionality reduction technique has been applied. The coverage and the capacity of the network in the form of tractable and numerically computable expressions are derived with the help of stochastic geometry. These expressions can be further simplified for some relevant special cases. It has been noticed that the coverage of the network can either increase or decrease with the increase in the density of the PBs depending upon the SINR threshold.  Furthermore, the capacity of the network increases with an increase in the density of the BNs, whereas the coverage probability decreases. Finally, the coverage of the wireless powered backscatter network has been compared with the coverage of the regularly powered network and it has been observed that for very high density of the PBs the coverage of the wireless powered backscattered network approaches to the coverage of the regular powered network.

A future study might consider a minimum circuit-power threshold on the harvest power of the BN and if the harvested power is not satisfied then the BN will not backscatter the signal. It would also be interesting to incorporate different collisions resolution schemes such as using directional antennas and/or successive interference cancellation and study its effect on the performance of this wireless powered backscatter network. 
\appendices
\section{}
\textit{Proof of Theorem 1:}
By using the definition of $\gamma_{R_0}$ in \eqref{sinr_2nd}, we write $\mathbb{P}_s$ as
\begin{multline}
\mathbb{P}_s =   \mathbb{P}\left[\gamma_{R_0}\geq\Theta_{m}\right]= \idotsint\displaylimits_{0<x_{0,1}<x_{0,2}<\cdots<x_{0,N_p}<\infty}\mathbb{P}\left[\gamma_{R_0}\geq\Theta_{m} | x_{0,1},\cdots, x_{0,N_p} \right] \\ f_x(x_{0,1},\cdots,x_{0,N_p})\mathrm{d}x_{0,1},\cdots,\mathrm{d}x_{0,N_p},
\label{thm1_proof_eq1}
\end{multline}
where the integration is over the joint PDF of the distances $x_{0,1},\cdots,x_{0,N_p}$ to nearest $N_p$ PBs.
Now, the conditional probability $\mathbb{P}\left[\gamma_{R_0}\geq\Theta_{m} | x_{0,1},\cdots, x_{0,N_p} \right]$  can be written as
\begin{multline}
\mathbb{P}\left[\gamma_{R_0}\geq\Theta_{m} | x_{0,1},\cdots, x_{0,N_p} \right]=\mathbb{P}\left[\frac{ \textbf{h}_{0,0}^b d_{0,0}^{-\alpha^b}\sum_{i=1}^{N_p} x_{0,i}^{-\alpha^f}}{\sum_{j\in\Phi_b\backslash BN_0}\textbf{h}_{0,j}^b d_{0,j}^{-\alpha^b}\sum_{k=1}^{N_p} x_{j,k}^{-\alpha^f}+\frac{N_0}{\beta P_C}} \geq \Theta_{m} \right] \\
=\mathbb{P}\left[\textbf{h}_{0,0}^b \geq s \left(I+\frac{N_0}{\beta P_C}\right)\right]\overset{a}=\mathbb{E}_I\left[\exp(-s\left(I+\frac{N_0}{\beta P_C}\right))\right]= \exp(-\frac{s N_0}{\beta P_C})\mathcal{L}_I(s),
\label{thm1_proof_eq2}
\end{multline}
where $s=\frac{\Theta_{m}d_{0,0}^{\alpha^b}}{\sum_{i=1}^{N_p}x_{0,i}^{-\alpha^f}}$, $I = \sum_{j\in\Phi_b\backslash BN_0}\textbf{h}_{0,j}^b d_{0,j}^{-\alpha^b}\sum_{k=1}^{N_p}x_{j,k}^{-\alpha^f}$, $\left(a\right)$ follows because $\textbf{h}_{0,0}^b\sim \exp(1)$, and $\mathcal{L}_I(s)=\mathbb{E}_I\left[\exp\left(-sI\right)\right]$ is the Laplace transform of the interference $I$ evaluated at $s$.  Next, we find the Laplace transform as
\begin{multline}
\mathcal{L}_I(s) =  \mathbb{E}_I\left[\exp(-sI)\right] = \mathbb{E}_{\textbf{h}_{0,j}^b,d_{0,j},x_{j,1},\cdots,x_{j,N_p}}\left[\exp\left(-s\sum_{j\in\Phi_b\backslash BN_0}\textbf{h}_{0,j}^b d_{0,j}^{-\alpha^b}\sum_{k=1}^{N_p}x_{j,k}^{-\alpha^f}\right)\right] \\
\overset{a} = \mathbb{E}_{d_{0,j},x_{j,1},\cdots,x_{j,N_p}}\left[\prod_{j\in\Phi_b\backslash BN_0}\mathbb{E}_{\textbf{h}_{0,j}^b}\left[\exp\left(-s \textbf{h}_{0,j}^b d_{0,j}^{-\alpha^b}\sum_{k=1}^{N_p}x_{j,k}^{-\alpha^f} \right)\right]\right] \\
\overset{b}= \mathbb{E}_{d_{0,j}}\left[\prod_{j\in\Phi_b\backslash BN_0}\mathbb{E}_{x_{j,1},\cdots,x_{j,N_p}}\left[\frac{1}{1+s d_{0,j}^{-\alpha^b}\sum_{k=1}^{N_p}x_{j,k}^{-\alpha^f}}\right]\right] \\
\overset{c}= \exp\left(-2\pi \lambda_b \int_{0}^{\infty} \mathbb{E}_{x_1,\cdots,x_{N_p}} \left[\frac{1}{1+\frac{u^{\alpha^b}}{s\sum_{k=1}^{N_p} x_{k}^{-\alpha^f}}}\right]u\mathrm{d}u\right), \\
 \overset{d}=\exp\left(-\frac{2\pi^2 \lambda_b s^{2/\alpha^b}}{\alpha^b \sin\left(\frac{2 \pi}{\alpha^b}\right)} \mathbb{E}_{x_1,\cdots,x_{N_p}}\left(\sum_{k=1}^{N_p} x_k^{-\alpha^f}\right)^{2/\alpha^b}\right)
\label{lap_thm1}
\end{multline}
where $\left(a\right)$ follows due to the independence of $\textbf{h}_{0,j}^b$, $\left(b\right)$ follows from the fact that $\textbf{h}_{0,j}^b\sim \exp(1)$ and the independence of $\mathbb{E}_{x_{j,1},\cdots,x_{j,N_p}}$,  $\left(c\right)$ follows from the probability generating functional (PGFL) of 2-D PPP \cite{martin_book}, changing the coordinates to the polar coordinates and evaluating one of the integral, and $\left(d\right)$ follows by changing the order of integration and expectation and then computing the integral. 
The last expectation $\mathbb{E}_{x_1,\cdots, x_{N_p}}$ can be computed by using the joint PDF given in \eqref{joint_pdf_dist}. By plugging \eqref{lap_thm1} in \eqref{thm1_proof_eq2}, and then \eqref{thm1_proof_eq2} in \eqref{thm1_proof_eq1}, we get the final expression for $\mathbb{P}_s$.
\section{}
\textit{Proof of Theorem 2:}
The SIR coverage probability can be written as
\begin{equation}
\mathbb{P}_s = \mathbb{P}\left[\gamma_{R_0}\geq\Theta_{m}\right]= \mathbb{P}\left[\frac{ \textbf{h}_{0,0}^b d_{0,0}^{-\alpha^b}\sum_{i\in\Phi_p} x_{0,i}^{-\alpha^f}}{\sum_{j\in\Phi_b\backslash BN_0}\textbf{h}_{0,j}^b d_{0,j}^{-\alpha^b}\sum_{k\in\Phi_p}x_{j,k}^{-\alpha^f}}\geq \Theta_m\right],
\label{SIR_covC_1}
\end{equation}
where we put \eqref{sir_second} for $\gamma_{R_0}$. We know that $\textbf{h}_{0,0}^b$ is exponentially distributed with unit mean due to which the above expression becomes
\begin{equation}
\mathbb{P}_s =\mathbb{E}_{\Phi_b, \Phi_p, \textbf{h}_{0,j}^b}\left[\exp\left(-\frac{\Theta_m\sum_{j\in\Phi_b\backslash BN_0}\textbf{h}_{0,j}^b d_{0,j}^{-\alpha^b}\sum_{k\in\Phi_p}x_{j,k}^{-\alpha^f}}{d_{0,0}^{-\alpha^b}\sum_{i\in\Phi_p} x_{0,i}^{-\alpha^f}}\right)\right],
\label{SIR_covC_2}
\end{equation}
where the expectation should be with respect to the $\Phi_b, \Phi_p$, and $\textbf{h}_{0,j}^b$. Due to the independence of $\textbf{h}_{0,j}^b$ the previous expression can be written as
\begin{multline}
\mathbb{P}_s =\mathbb{E}_{\Phi_b, \Phi_p}\left[\prod_{j\in\Phi_b\backslash BN_0}\mathbb{E}_{\textbf{h}_{0,j}^b}\left[\exp\left(-\frac{\Theta_m \textbf{h}_{0,j}^b d_{0,j}^{-\alpha^b}\sum_{k\in\Phi_p}x_{j,k}^{-\alpha^f}}{d_{0,0}^{-\alpha^b}\sum_{i\in\Phi_p} x_{0,i}^{-\alpha^f}}\right)\right]\right]\\
\overset{a}=\mathbb{E}_{\Phi_b}\left[\prod_{j\in\Phi_b\backslash BN_0}\mathbb{E}_{\Phi_p}\left[\frac{1}{1+\frac{\Theta_m d_{0,j}^{-\alpha^b}\sum_{k\in\Phi_p}x_{j,k}^{-\alpha^f}}{d_{0,0}^{-\alpha^b}\sum_{i\in\Phi_p} x_{0,i}^{-\alpha^f}}}\right]\right]\\
\overset{b}=\exp\left(-2\pi\lambda_b\int_0^{\infty}\left(1-\mathbb{E}_{\Phi_p}\left[\frac{1}{1+\frac{\Theta_m u^{-\alpha^b}\sum_{k\in\Phi_p}x_{j,k}^{-\alpha^f}}{d_{0,0}^{-\alpha^b}\sum_{i\in\Phi_p} x_{0,i}^{-\alpha^f}}}\right]\right)u\mathrm{d}u\right)\\
\overset{c}\geq\exp\left(-2\pi\lambda_b\int_0^{\infty}\left(1-\frac{1}{1+\frac{\Theta_m u^{-\alpha^b}}{d_{0,0}^{-\alpha^b}}\mathbb{E}_{\Phi_p}\left[\frac{\sum_{k\in\Phi_p}x_{j,k}^{-\alpha^f}}{\sum_{i\in\Phi_p} x_{0,i}^{-\alpha^f}}\right]}\right)u\mathrm{d}u\right),
\label{sir_cov_2nd}
\end{multline}
where $\left(a\right)$ follows because $\textbf{h}_{0,j}^b\sim \exp(1)$ and both $\Phi_b$ and $\Phi_p$ are independent,  $\left(b\right)$ follows due the probability generating functional of 2-D PPP \cite{martin_book}, while in $\left(c\right)$ we utilize the Jensen's inequality. Now to find the last expectation $\mathbb{E}_{\Phi_p}$, we use $\mathbb{E}_{\Phi_p}\left[\frac{\sum_{k\in\Phi_p}x_{j,k}^{-\alpha^f}}{\sum_{i\in\Phi_p} x_{0,i}^{-\alpha^f}}\right]\geq\frac{\mathbb{E}_{\Phi_p}\left[\sum_{k\in\Phi_p}x_{j,k}^{-\alpha^f}\right]}{\mathbb{E}_{\Phi_p}\left[\sum_{i\in\Phi_p} x_{0,i}^{-\alpha^f}\right]}=1$, where to find $\mathbb{E}_{\Phi_p}$ we utilize Lemma 2.  The final expression in \eqref{eq:thm2} is obtained by doing some algebra. 
\section*{Acknowledgment}
We thank Kaifeng Han and Kaibin Huang for discussion and participation in the initial stage of this work.
\ifCLASSOPTIONcaptionsoff
  \newpage
\fi
\renewcommand\refname{References} 
\bibliographystyle{IEEEtran}
\bibliography{IEEEabrv,Reference}


%
%
%
%
%
\end{document}